\documentclass[acmsmall]{acmart}
\usepackage[utf8]{inputenc}
\usepackage{amsmath}
\usepackage{bm}
\usepackage{graphicx,subcaption}
\usepackage[ruled,vlined]{algorithm2e}
\usepackage{algorithmic}
\usepackage{comment}
\usepackage{bigstrut}
\usepackage{multirow}
\usepackage{array}
\usepackage{tabularx}
\usepackage{hhline}
\usepackage{pifont}


\newcommand{\specialcell}[2][c]{%
   \begin{tabular}[#1]{@{}l@{}}#2\end{tabular}}

     
\ifodd 1
\newcommand{\com}[1]{\textbf{\color{red} (COMMENT: #1)}}
\newcommand{\Qiulin}[1]{\textbf{\color{blue}(COM: #1)}}
\newcommand{\mo}[1]{{\color{purple}#1}}
\else

\newcommand{\com}[1]{}
\newcommand{\mo}[1]{}
\newcommand{\Qiulin}[1]{}
\fi

\theoremstyle{plain}
\newtheorem{thm}{\protect\theoremname}
\theoremstyle{plain}

\theoremstyle{plain}
\newtheorem{prop}[thm]{\protect\propname}
\theoremstyle{plain}
\newtheorem{lem}[thm]{\protect\lemmaname}
\theoremstyle{plain}

\providecommand{\definitionname}{Definition}
\providecommand{\propname}{Proposition}
\providecommand{\conname}{Conjecture}
\providecommand{\lemmaname}{Lemma}
\providecommand{\theoremname}{Theorem}  

\newcommand{\probnameS}{\textsf{OOIC-S}}
\newcommand{\probname}{\textsf{OOIC-M}} 
\newcommand{\algname}{\textsf{A\&P}} 
\newcommand{\allallt}{\textsf{AAt}}
\newcommand{\crp}{\textsf{CR-Pursuit($\pi$)}}
\begin{document}

\title{Competitive Online Optimization with Multiple Inventories: A Divide-and-Conquer Approach}
\author{Qiulin Lin, Yanfang Mo, Junyan Su, and Minghua Chen}
\affiliation{%
 \department{School of Data Science}
 \institution{City University of Hong Kong}
 \city{Hong Kong}
  \country{China}
}

\begin{abstract}
 We study a competitive online optimization problem with multiple inventories. In the problem, an online decision maker seeks to optimize the allocation of multiple capacity-limited inventories over a slotted horizon, while the allocation constraints and revenue function come online at each slot.  The problem is challenging as we need to allocate limited inventories under adversarial revenue functions and  allocation constraints, while our decisions are coupled among multiple inventories and different slots. We propose a divide-and-conquer approach that allows us to decompose the problem into several single inventory problems and solve it in a two-step manner with almost no optimality loss in terms of competitive ratio (CR). Our approach  provides new angles, insights and results to the problem, which differs from the widely-adopted primal-and-dual framework. Specifically, when the gradients of the revenue functions are bounded in a positive range, we show that our approach can achieve a tight CR that is optimal when the number of inventories is small, which is better than all existing ones. For an arbitrary number of inventories, the CR we achieve is within an additive constant of one to a lower bound of the best possible CR among all online algorithms for the problem. We further characterize a general condition for generalizing our approach to different applications. For example, for a generalized one-way trading problem with price elasticity, where no previous results are available, our approach obtains an online algorithm that achieves the optimal CR up to a constant factor. 
\end{abstract}

\maketitle

\section{Introduction}\label{sec:intro}


Competitive online optimization is a fundamental tool for decision making with uncertainty. 
We have witnessed its wide applications spreading from EV charging~\cite{zhao2017robust,alinia2018competitive,lin2021minimizing,wang2018electrical}, micro-grid operations~\cite{lu2013online,zhang2016peak}, energy storage scheduling~\cite{mo2021eEnergy,mo2021infocom} to data center provisioning~\cite{lin2012dynamic,lu2012simple}, network optimization~\cite{shi2018competitive,guo2018joint}, and beyond. 
{Theoretically, there are multiple paradigms of general interest in the online optimization literature. Typical examples include the online covering and packing problem~\cite{buchbinder2009design}, online matching~\cite{TCS-allocation}, online knapsack packing~\cite{zhou2008budget}, one-way trading problem~\cite{el2001optimal}, and online optimization with switching cost~\cite{lin2012online,bansal20152}.} 


{Among these applications and paradigms, we focus on optimizing the allocation of limited resources, like inventories, budgets, or power, across a multi-round decision period with dynamic per-round revenues and allocation conditions.} For example, in online ads display platform (e.g., google search), the operator allocates display slots to multiple advertisers (\emph{cf.} inventories) with contracts on the maximum number (\emph{cf.} capacity) of impressions~\cite{Feldman2009Online}. At each round, the real-time search reveals the number of total available display slots and interesting slots for each advertiser (\emph{cf.} allocation constraint). It also reveals the payoff of each advertiser from impressions at the display slots (\emph{cf.} revenue function). The revenue of an advertiser relies on the number of obtained impressions and the quality of each impression relating to dynamic factors like the click rate and engagement rate~\cite{Devanur2012Concave}. 


The above observations motivate us to study the important paradigm~--~competitive online optimization problem under multiple inventories (\probname), where a decision-maker with multiple inventories of fixed capacities seeks to maximize the per-round separable revenue function by optimizing the inventory allocation at each round. The decision maker further faces two allocation constraints at each round, the allowance constraint that limits the total allocation of different inventories at the round and the rate constraint that limits the allocation of each inventory at the round. 
 The problem has two main challenges. First, as in online optimization under (single) inventory constraints ~\cite{cr_pursuit2018,Sun2020Competitive}, the decision-maker does not have access to future revenue functions, while the limited capacity of each inventory coupling the online decisions regarding each inventory across time. 
Second, the allocation constraints that couples the allocation decisions across the multiple inventories at each round.  The combination of allowance constraint and rate limit constraint appears frequently and is with known challenges in online matching and allocation problems~\cite{azar2006maximizing,KALYANASUNDARAM20003bmatching,TCS-allocation,Deng2018Capacity}. 

In the literature, the authors in~\cite{ma2020algorithms,Sun2020Competitive} tackle the problem using the well-established online primal-and-dual framework~\cite{buchbinder2009design,Devanur2013Randomized,niazadeh2013unified}. They design a threshold function for each inventory with regard to the allocated amount, which can be view as the marginal cost of the inventory. They then greedily allocate the inventory at each round by maximizing the pseudo-revenue function defined by the difference between the revenue function and the threshold function. In contrast, we propose, in this paper, a divide-and-conquer approach to the online optimization with multiple inventories. Our approach is novel and provides additional insights to the problem. It allows us to separate the two challenges of the problem, 1) the online allocation for each inventory subject to the limited inventory capacity and unknown future revenue functions, and 2) the coupled allocation among multiple inventories due to the allowance constraint at each round. 
In the following, we summarize our contributions.

First,  in Sec.~\ref{sec:single}, we generalize the \crp~algorithm~\cite{cr_pursuit2018} to tackle the single inventory case, \probnameS, which is an important component in our divide-and-conquer approach. We show that it achieves the optimal competitive ratio (CR) among all online algorithm for \probnameS. 

Second, in Sec.~\ref{sec:online}, we propose a divide-and-conquer approach to design online algorithms for online optimization under multiple inventories and dynamic allocation constraints. By adding an allowance allocation step, we decompose the multiple inventory problem into several single inventory problems with a small optimality loss in terms of CR.   
We show that when the revenue functions have bounded gradients, the CR achieved by our algorithm is optimal at a small number of inventories and within an additive constant to the lower bound of the best possible CR among all online algorithms for the problem at an arbitrary number of inventories. 

Third, in Sec.~\ref{sec:discussion-general-application}, we discuss generalizations of our proposed approach to broader classes of revenue functions. We provide a sufficient condition for applying our online algorithm and derive the corresponding CR it can achieve. For example, we consider revenue functions capturing the one-way trading problem with price elasticity, where only the results on single inventory case are available in existing literature~\cite{Sun2020Competitive,cr_pursuit2018}. We show that our approach obtains an online algorithm that achieves the optimal CR up to a constant factor. 

Finally, our results in Sec.~\ref{sec:Step-I} generalize the online allocation maximization problem in~\cite{azar2006maximizing} and the online allocation with free disposal problem in~\cite{Feldman2009Online} by admitting allowance augmentation in online algorithms, which is of independent interest. We show that we can improve the CR from~$\frac{e}{e-1}$ to~$\frac{1}{\pi}\cdot \frac{e^{\frac{1}{\pi}}}{e^{\frac{1}{\pi}}-1}$, when our online algorithms are endowed with~$\pi$-time augmentation in allowance and allocation rate at each round. 


\section{Related Work}\label{sec:related}


\begin{table}[tb]
	\begin{tabular}{l|l|lr|l|}
		\cline{2-5}
		\multicolumn{1}{c|}{} & Model & \multicolumn{2}{c|}{Existing results} & Our result$^\mathsection$ \\ \hhline{-====}
		\multicolumn{1}{|l||}{\multirow{2}{*}[-2pt]{\rotatebox[origin=c]{90}{Single}}} & Linear & $\ln\theta + 1^\star$ & \cite{lorenz2009optimal,Sun2020Competitive,el2001optimal,cr_pursuit2018} & $\ln\theta + 1^\star$ \bigstrut \\ \cline{2-5} 
		\multicolumn{1}{|l||}{} & Concave$^\dagger$ & $\ln \theta + 1^\star$ & \cite{Sun2020Competitive} & $\ln\theta + 1^\star$ \bigstrut \\ \hhline{=====}
		\multicolumn{1}{|l||}{\multirow{4}{*}{\rotatebox[origin=c]{90}{Multi}}} & $\theta=1$ & $e/(e-1)^\star$ & \cite{karp1990optimal,Devanur2013Randomized,KALYANASUNDARAM20003bmatching} & $e/(e-1)^\star$ \bigstrut \\ \cline{2-5} 
		\multicolumn{1}{|l||}{} & \multirow{2}{*}{Concave$^\dagger$} & \multirow{2}{*}{$<\ln\theta+2$} & \multirow{2}{*}{~\cite{Sun2020Competitive,ma2020algorithms}} & \multirow{1}{*}{$<\ln \theta + 2$, if $\ln\theta +1 < N$  \bigstrut} \\ 
		\multicolumn{1}{|l||}{} &  &  &  & {$\ln \theta + 1^\star$, otherwise  \bigstrut}\\\cline{2-5} 
		\multicolumn{1}{|l||}{} & \multirow{1}{*}{Concave$^\ddagger$} & \multirow{1}{*}{None} & \multirow{1}{*}{} & \multirow{1}{*}{$\Omega({\ln \theta})$ \bigstrut} \\ \hline
		\end{tabular}
	\\\footnotesize{
	    $\theta$ is a parameter representing the level of the input uncertainty.  $N$ is the number of inventories.\\ 
	    $\mathsection$: We propose a divide-and-conquer approach while existing work mostly base on primal-and-dual framework~\cite{buchbinder2009design}.\\
		$^\dagger$: Concave revenue function with bounded gradients. 
		$^\ddagger$: Concave revenue function with price elasticity (See Sec.~\ref{sec:discussion-general-application}).\\
		$^\star$: The result is optimal. 
		}
  	\caption{Comparison of our work and existing studies for online optimization under inventory constraints.}
  	\label{tab:related}%
\end{table}

We focus on the competitive online optimization problem with multiple inventories and dynamic allocation constraints. Our problem generalizes a couple of well-studied online problems, including the one-way trading problem~\cite{el2001optimal}, the online optimization under a single inventory constraint~\cite{cr_pursuit2018,Sun2020Competitive}, and the online fractional matching problem~\cite{KALYANASUNDARAM20003bmatching,azar2006maximizing}. Our results also reproduce the optimal CR under such settings.  Our problem is studied in ~\cite{Sun2020Competitive}, and a discrete counterpart is studied in~\cite{ma2020algorithms}. Compared with the existing study, we propose a novel divide-and-conquer approach. We show that our approach achieves a close-to-optimal CR, which notably matches the lower bound when the number of inventories is relatively small.  We summarize the related literature on online optimization under inventory constraints in Table~\ref{tab:related}. 

Our problem also covers a fractional version of online ads display problem~\cite{TCS-allocation}, which is an online matching problem with vertex capacity and edge value. No positive result is possible when the value is unbounded~\cite{Feldman2009Online}. In~\cite{Feldman2009Online}, the authors consider a model of ``free disposal'', i.e., the online decision maker can remove the past allocated edge without any cost (but can not re-choose the past edge). Here, we instead consider the case that the values of all edges are bounded in a pre-known positive range and no removable on past decision is available. We are interested in how the CR of an online algorithm behaviors with regard to the uncertain range of value. Interestingly, by our divide-and-conquer approach, we can extend the results in the ``free disposal'' model to the irremovable setting. Also, we provide additional insight and results to the problem when considering an online augmentation scenario that the online decision maker is with larger allowance and allocation rate at each round; see more details in~\ref{sec:Step-I}. 

Another related problem is online knapsack packing problem~\cite{zhou2008budget}. In the problem, items are associated with weight and value and come online. An online decision maker with a capacity limited knapsack determines whether to pack the item at its arrival to maximize the total value while guaranteeing that the total weight would not exceed the capacity. A single knapsack problem with infinitesimal assumption is studied in~\cite{zhou2008budget} with the application on key-work bidding.  It can be viewed as a special case of the one-way trading problem~\cite{Sun2020Competitive}, which is covered by the online optimization problem with a single inventory~\cite{cr_pursuit2018}. The fractional multiple knapsack packing problem with unit weight is studied in~\cite{Sun2020Competitive}, where the decision maker can pack any fraction of the item instead of $0/1$ decision. Our problem is also related to the online packing problem~\cite{buchbinder2009design,azar2016online} where the authors consider general packing constraints. Here, we focus on specific inventory constraints and allocation constraints. 


\section{Problem Formulation}\label{sec:problem}

In the section, we formulate the optimal allocation problem with multiple inventories. We discuss the practical online scenario and the performance metric for online algorithms. We further discuss the state-of-the-art to the problem. We summarize the important notions in Table~\ref{tab:notation}.


\begin{table}[t]
\centering
\caption{Notation Table.}
\label{tab:notation}
\begin{tabular}{|l|l|}
\hline
\textbf{Notation} & \textbf{Meaning}  \\ \hline \hline
$g_{i,t}(\cdot)$ &Revenue function of inventory $i$ at slot $t$ \\
$v_{i,t}$ & Allocation of inventory $i$ at slot $t$ \\
$C_i$ & Capacity of inventory $i$ \\ 
$A_t$ & Allowance of total allocation among all inventories at slot $t$\\  
$\delta_{i,t}$ & The maximum allocation of inventory $i$ at slot $t$\\  
$\theta$ & \specialcell{$\theta=p_{\max}/p_{\min}$, where \\$[p_{\min},p_{\max}]$ is the  range of the gradient of revenue functions}\\ \hline

$\hat{a}_{i,t}$ & Online allowance allocation to inventory $i$ at slot $t$\\  
$\hat{v}_{i,t}$ & Online inventory allocation of inventory $i$ at slot $t$\\ 
 
$\eta_{i,t}$ & Online revenue of inventory $i$ up to slot $t$\\
$\tilde{OPT}_{i,t}$ & \specialcell{Optimal objective of \textsf{\probname$_i$} given allowance allocations \\ $\;\;$ and revenue functions up to slot $t$}\\ 
$OPT_t$ & Optimal offline total revenue of \probname~up to slot $t$\\


$\Phi(\pi)$ & Maximum total online allocation of \textsf{CR-Pursuit($\pi$)}
 \\\hline
\end{tabular}
\end{table}

\subsection{Problem Formulation}

We consider $N$ inventories, and a decision period with $T$ slots. We denote the capacity of inventory $i$ as $C_i$. At each slot $t\in[T]$, each inventory $i$ is associated with a revenue function $g_{i,t}(v_{i,t})$, which represents the revenue of allocating an amount of $v_{i,t}$ inventory $i$ at slot $t$. However, at each slot $t\in[T]$, we are restricted to allocate at most $\delta_{i,t}$ of inventory $i$, and the total allocation of all inventories at each slot $t$ is bounded by the allowance $A_t$. Our goal is to find an optimal allocation scheme that maximizes the total revenue in the decision period, while satisfying the allocation restrictions. 

Overall, we consider the following problem,
\begin{align}
  \probname:\;  \max \quad& \sum_{i\in[N]}{\sum_{t\in[T]}{g_{i,t}(v_{i,t})}}\\
    \text{s.t.}\quad & \sum_{t} v_{i,t}\leq C_i, \forall i\in[N], \label{eq:inventory}\\
         & \sum_{i} v_{i,t}\leq A_t,\forall t\in[T],\label{eq:allowance}\\
         & 0\leq v_{i,t}\leq \delta_{i,t}, \forall t\in[T],i\in[N],\label{eq:rate}
\end{align}
In \probname, we optimize the inventory allocation $\{v_{i,t}\}_{i\in[N],t\in[T]}$ to achieve the maximum total revenue subjecting to the capacity constraint of each inventory ~\eqref{eq:inventory}, the allowance constraint at each slot~\eqref{eq:allowance}, and the allocation rate limit constraint for each inventory at each slot~\eqref{eq:rate}.  Without loss of generality, we assume that $\delta_{i,t}\leq A_t, \forall i,t$. We consider the following set of revenue functions, denoted as $\mathcal{G}$,

\begin{itemize}
         \item $g_{i,t}(\cdot)$ is concave and differentiable with $g(0)=0$;
         \item $g_{i,t}'(v_{i,t})\in[p_{\min},p_{\max}], 
         \forall v_{i,t}\in[0,\delta_{i,t}]$.
\end{itemize}


We consider that $p_{\max}\geq p_{\min}>0$ and denote $\theta=p_{\max}/p_{\min}$. The revenue functions capture the case where the marginal revenue of allocating more inventory is non-increasing in the allocation amount but always between $p_{\min}$ and $p_{\max}$. We also discuss how we can apply our approach to different sets of revenue functions with corresponding applications in Sec~\ref{sec:discussion-general-application}. 



In the offline setting, the problem input is known in advance, and \probname~is a convex optimization problem with efficient optimal algorithms. 
However, in practice, we are facing an online setting as we describe next. 

\subsection{Online Scenario and Performance Metric}
\label{sec:online_setting}

In the online setting, we consider that the pre-known problem parameters include the class of revenue function $\mathcal{G}$ and corresponding range $[p_{\min},p_{\max}]$, the number of inventories $N$, and the capacity of each inventory $\{C_i\}_{i\in[N]}$.  Other problem parameters are revealed sequentially. More specifically, at each slot $t$, the online decision maker without the information of the decision period $T$ is fed the revenue functions $\{g_{i,v}(\cdot)\}_{i\in[N]}$, the allowance $A_t$ and the allocation limits $ \{\delta_{i,t}\}_{i\in[N]}$. We needs to irrevocably determine the allocation at slot $t$, i.e., $ \{v_{i,t}\}_{i\in[N]}$. After that, if the decision period ends, we stop and know the information of $T$. Otherwise, we move to next slot and continue the allocation. We denote a possible input as
 \begin{equation}
 \sigma=\left(T, \{g_{i,t}(\cdot)\}_{i\in[N],t\in[T]}, \{A_t\}_{t\in[T]}, \{\delta_{i,t}\}_{i\in[N],t\in[T]}\right)
 \end{equation}
We use the CR as a performance metric for online algorithms. The CR of an algorithm $\mathcal{A}$ is defined as,
\begin{equation}
    \mathcal{CR(A)} = \sup_{\sigma\in\Sigma} \frac{OPT(\sigma)}{ALG(\sigma)},  
\end{equation}
where $\sigma$ denotes an input, $OPT(\sigma)$ and $ALG(\sigma)$ denote the  offline optimal objective and the online objective applying $\mathcal{A}$ under input $\sigma$, respectively. We use $\Sigma$ to represent all possible input we are interested in.  Specifically, 
\begin{equation}
    \Sigma\triangleq \{\sigma\left|T\in \mathbb{Z}^+,g_{i,t}(\cdot)\in\mathcal{G}, A_t\geq 0, \delta_{i,t}\geq 0,\forall i\in[N],t\in[T]\right.\},
\end{equation}
In competitive analysis, we focus on the worst-case guarantee of an online algorithm, which is defined by the maximum performance ratio between the offline optimal and the online objective of the algorithm.  In the online setting, we are facing two main challenges, 1) the decision maker does not known the future revenue functions, while the allocation now would affect the future decision due to the capacity constraint~\cite{cr_pursuit2018}; and 2) the online allowance constraints and rate constrains couple the decisions across the inventories, which are with known challenges in online matching and allocation problems~\cite{KALYANASUNDARAM20003bmatching,TCS-allocation}. 

\subsection{State of the Art}

The online problem has been studied in ~\cite{Sun2020Competitive} under the same revenue function set $\mathcal{G}$. A discrete counterpart of the problem is studied in~\cite{ma2020algorithms}. In some special cases, our revenue functions cover the linear functions with slopes between $[p_{\min},p_{\max}]$. In addition, when $p_{\max} = p_{\min}$, our problem reduces to maximizing the total amount of allocation, which has been widely studied in~\cite{azar2006maximizing,Deng2018Capacity,karp1990optimal,Devanur2013Randomized,KALYANASUNDARAM20003bmatching}. Here, we introduce a novel divide-and-conquer approach for the problem and show the our approach can achieve a close-to-optimal CR. In Sec.~\ref{sec:discussion-general-application}, we also show that our approach can be applied to different sets of revenue functions, which have not been studied in the existing literature.

Before proceeding, we discuss the two most relevant works, namely~\cite{ma2020algorithms} and ~\cite{Sun2020Competitive} in the literature. They both apply the online primal-dual analysis and design threshold functions for the online decision-makings of \probname. While the work~\cite{ma2020algorithms} studied a discrete setting differing from the continuous setting studied in~\cite{Sun2020Competitive} and our work, it is known in~\cite{niazadeh2013unified} that the same threshold-based function can be directly applied to the continuous setting, attaining the same CR. In the following, let us reproduce the algorithm for the continuous setting and the CR achieved by~\cite{ma2020algorithms}. The CR is better than the one proposed in~\cite{Sun2020Competitive}, and thus we deem it as the state of the art in the literature. We also compare the CR they achieve and ours in Sec.~\ref{sec:online}; see an illustration example in Fig.~\ref{fig:cr}.

Let~$\phi_i(w)$ denote the threshold function for each inventory~$i$, where~$w$ refers to the amount of allocated capacity of the inventory and~$\phi_i(w)$ can be viewed as a pseudo-cost of the allocation. At each slot, the algorithm determines the allocated amount~$v_{i,t}$ of inventory~$i$ at slot~$t$ by maximizing the per-round pseudo-revenue, which is the difference between the revenue and the threshold function, i.e.,


\begin{align}
    \textsf{(P\&D):}\; \max \quad& \sum_{i\in[N]}\left({g_{i,t}(v_{i,t})}-\int_{w_{i,t-1}}^{w_{i,t-1}+v_{i,t}} \phi_i(w) dw\right) \label{eq:primal-dual-algorithm}\\
    \text{s.t.}\quad &          \sum_{i} v_{i,t}\leq A_t,\label{eq:allowance-D}\\
         & 0\leq v_{i,t}\leq \delta_{i,t}, \forall i\in[N],
\end{align}
where $w_{i,t-1}$ is the total online allocation of inventory $i$ from the first slot to slot $t-1$. The algorithm is proposed in~\cite{Sun2020Competitive}, which can be viewed as a continuous reinterpretation of the discrete algorithm in~\cite{ma2020algorithms}.  According to Appendix E of~\cite{ma2020algorithms}, we can apply the following threshold function given that 
the gradient of $g_{i,t}(v_{i,t})$ is uniformly bounded in range $[p_{\min},p_{\max}]$,

\begin{equation}\label{eq:thresholdfunction}
    \phi_i(w) = \begin{cases}
    (p_{\min})^{\frac{1-w/C_i}{1-\chi}}(p_{\max})^{\frac{w/C_i}{1-\chi}}, & w\in[0,\chi\cdot C_i]; \\
    p_{\min}\cdot\frac{e^{w/C_i}-1}{e^{\chi}-1},&  w\in[\chi\cdot C_i,C_i];
    \end{cases}
\end{equation}
where $\chi={W(\ln(\theta)\cdot e^{\ln(\theta)-1})-\ln\theta+1} $ ($W(\cdot)$ is the Lambert-W function). 

\begin{prop}[\cite{ma2020algorithms}]
\label{thm:state-of-the-art}
    With the threshold function~(\ref{eq:thresholdfunction}), the threshold-based algorithm can achieve a competitive ratio of 
\begin{equation}\label{eq:cr_state_of_art}
    \tilde{\chi} = \frac{1}{1-e^{-\chi}}.
\end{equation}
\end{prop}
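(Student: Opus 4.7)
The approach I would take is the online primal-dual framework, which is the standard technique behind threshold-based online algorithms such as this one. The high level strategy is to use the online trajectory $\{\hat{v}_{i,t}\}$ produced by solving (P\&D) at each slot to construct a dual feasible solution of \probname, then combine weak duality with a per-slot accounting lemma to read off the competitive ratio.

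First I would form the Lagrangian dual of \probname by introducing multipliers $\lambda_i \geq 0$ for the inventory capacity constraint~\eqref{eq:inventory}, $\mu_t \geq 0$ for the allowance~\eqref{eq:allowance}, and $\nu_{i,t}\geq 0$ for the rate limit~\eqref{eq:rate}. Using that $g_{i,t}$ is concave with $g_{i,t}'\in[p_{\min},p_{\max}]$ and $g_{i,t}(0)=0$, the dual function takes the form $\sum_i C_i\lambda_i + \sum_t A_t\mu_t + \sum_{i,t}\delta_{i,t}\nu_{i,t} + \sum_{i,t} g_{i,t}^{\ast}(\lambda_i+\mu_t+\nu_{i,t})$ (with $g_{i,t}^{\ast}$ the concave conjugate), which upper bounds OPT by weak duality.

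Next I would define the dual variables from the online run: set $\lambda_i \coloneqq \phi_i(w_{i,T})$, i.e.\ the threshold evaluated at the final online allocation, and let $\mu_t,\nu_{i,t}$ be the KKT multipliers of (P\&D) at slot $t$. Dual feasibility would then be checked slot by slot, using (a) the per-slot optimality of $\hat{v}_{i,t}$ in (P\&D), which gives $g_{i,t}'(\hat{v}_{i,t})=\phi_i(\hat{w}_{i,t-1}+\hat{v}_{i,t}) + \mu_t + \nu_{i,t}$ at interior points, and (b) the monotonicity $\phi_i(\hat{w}_{i,t})\leq \phi_i(w_{i,T})=\lambda_i$. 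The concavity of $g_{i,t}$ then propagates this pointwise bound to all $v_{i,t}\in[0,\delta_{i,t}]$, yielding dual feasibility.

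The core per-slot lemma I would prove is a pseudo-revenue inequality of the form
\begin{equation}
g_{i,t}(\hat{v}_{i,t}) + \hat{v}_{i,t}\bigl(\mu_t+\nu_{i,t}\bigr) \;\geq\; \frac{1}{\tilde{\chi}}\Bigl(g_{i,t}^{\ast}(\lambda_i+\mu_t+\nu_{i,t}) + \text{increment in dual cost from inventory } i\Bigr). \nonumber
\end{equation}
Summing this over all $(i,t)$, telescoping the inventory-side terms using $\Lambda_i(w)\coloneqq\int_0^{w}\phi_i(u)du$, and invoking weak duality would give $\text{OPT}\leq \tilde{\chi}\cdot \text{ALG}$. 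The main obstacle, and what the entire design of~\eqref{eq:thresholdfunction} is engineered for, is verifying that the two-piece $\phi_i$ satisfies the differential inequality $\phi_i(w) \leq C_i\phi_i'(w) + p_{\min}$ on one regime and an analogous ODE involving the factor $1-e^{-\chi}$ on the other, together with the boundary conditions $\phi_i(0)\leq p_{\min}$, $\phi_i(C_i)\geq p_{\max}$, and continuity at the junction $w=\chi C_i$. Solving the two ODEs explicitly gives the geometric interpolation between $p_{\min}$ and $p_{\max}$ on $[0,\chi C_i]$ and the $(e^{w/C_i}-1)$ form on $[\chi C_i, C_i]$; forcing both continuity and the boundary value $\phi_i(C_i)=p_{\max}$ yields a transcendental equation in $\chi$ whose solution is precisely $\chi = W(\ln\theta\cdot e^{\ln\theta-1})-\ln\theta + 1$, and the resulting competitive ratio of the per-slot inequality is $\tilde{\chi}=1/(1-e^{-\chi})$.
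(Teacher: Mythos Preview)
Your proposal is essentially the same online primal--dual argument the paper carries out in Appendix~\ref{app:state-of-the-art}: the paper likewise sets the inventory dual to $\phi_i(w_{i,T})$, takes the allowance dual from the KKT multiplier of \textsf{(P\&D)}, and reduces everything to a per-inventory inequality $C_i\phi_i(w_{i,T})-\int_0^{w_{i,T}}\phi_i(w)\,dw\le(\tilde{\chi}-1)\sum_t g_{i,t}(\bar{v}_{i,t})$, which is exactly what your ``per-slot lemma plus telescoping'' would yield. Two small corrections: the paper absorbs the rate limit into the conjugate $h_{i,t}(\lambda)=\max_{0\le v\le\delta_{i,t}}g_{i,t}(v)-\lambda v$ rather than carrying a separate $\nu_{i,t}$, and the differential inequalities actually needed are $C_i\phi_i'(w)-\phi_i(w)\le p_{\min}(\tilde{\chi}-1)$ on $(0,\chi C_i)$ and $C_i\phi_i'(w)\le\tilde{\chi}\,\phi_i(w)$ on $(\chi C_i,C_i)$ (your stated inequality has the sign reversed and omits the $\tilde{\chi}$ factor).
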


We provide the proof in Appendix~\ref{app:state-of-the-art}. We note that the proofs in~\cite{Sun2020Competitive} and~\cite{ma2020algorithms} follow similar ideas based on the online primal-dual framework but are different in presentations 
as one is discussing in the continuous setting~\cite{Sun2020Competitive} and the other in the discrete setting~\cite{ma2020algorithms}. As we are considering the continuous setting, our proof follows the same presentation discussed in~\cite{Sun2020Competitive} and applies the properties of the threshold function~\eqref{eq:thresholdfunction} discussed in~\cite{ma2020algorithms}. 

\section{CR-Pursuit for Single Inventory Problem}
\label{sec:single}
In this section, we first discuss the problem with single inventory. We extend the \textsf{CR-Pursuit} in~\cite{cr_pursuit2018} to cover the rate limit constraint and provide additional insights that will facilitate our algorithm design under the multiple inventories case. 

In the single inventory case, \probname~is reduced to the following problem,
\begin{align}
     \textsf{{\probnameS}}: \; \max & \sum_{t} g_{t}(v_{t})\\
    \text{s.t.} \; & \sum_{t} v_{t}\leq C\\
    \text{var.} \; & 0 \leq v_{t}\leq \delta_{t}, \forall t,
\end{align}
where $C$ denotes the capacity of the inventory,  $g_t(v_t)$ represents the revenue of allocating $v_t$ quantity of inventory at slot $t$, and $\delta_t$ is the rate limit restricting the maximum allocation at slot $t$. The goal is still to maximize the total revenue by determining the inventory allocation $v_t$ at each slot. We focus on the online setting described in Sec.~\ref{sec:online_setting} with $N$ specified to be one. We note that the \probnameS~has been studied in~\cite{Sun2020Competitive}. Also, the case under different assumptions on revenue functions and without rate limit has also been studied in~\cite{cr_pursuit2018}.    

Here, we generalize the results in~\cite{cr_pursuit2018} to consider revenue function set $\mathcal{G}$ and  involve rate limit constraint. 
 The online algorithm \crp, proposed in~\cite{cr_pursuit2018}, is a single-parametric online algorithm with $\pi$ as the parameter. At slot $t$, the algorithm first computes the optimal value of \probnameS~given the input revenue functions and rate limits up to $t$, which we denote as $OPT_S(t)$. It then determines the allocation $\hat{v}_t$ at slot $t$ such that
\begin{equation}
    \label{eq:cr-pursuit-single-online-decision}
    g_t(\hat{v}_t) = \frac{1}{\pi} \left(OPT_S(t)-OPT_S(t-1)\right).
\end{equation}

Under \crp, we define the maximum total allocation of the algorithm,
\begin{equation}
\label{eq:maximum_allocation_cr_pursuit}
    \Phi(\pi) \triangleq \sup_{\sigma\in\Sigma}\sum_{t}{\hat{v}_t},
\end{equation}
where $\hat{v}_t$ is determined by~\eqref{eq:cr-pursuit-single-online-decision}. By design, we clearly have the following properties of the 
CR-Pursuit($\pi$) algorithm.

\begin{lem}
\label{thm:single-rate-limit-feasibility}
We have $\hat{v}_t \leq \frac{1}{\pi}\cdot\delta_t$.
\end{lem}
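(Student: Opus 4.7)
The plan is to show the two bounds in sequence: first a bound on the marginal increment $OPT_S(t)-OPT_S(t-1)$, then a concavity-based comparison of $g_t$ values that translates the revenue bound into the desired allocation bound.

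First I would argue that $OPT_S(t)-OPT_S(t-1)\le g_t(\delta_t)$. Let $(v_1^\star,\ldots,v_t^\star)$ be an offline optimum for the problem truncated at slot $t$. Its restriction to the first $t-1$ slots is feasible for the truncation at $t-1$, hence $OPT_S(t-1)\ge \sum_{s<t} g_s(v_s^\star)=OPT_S(t)-g_t(v_t^\star)$. Since $g_t$ is concave with $g_t(0)=0$ and has a strictly positive gradient on $[0,\delta_t]$, it is monotonically increasing on that interval, giving $g_t(v_t^\star)\le g_t(\delta_t)$. Combining with the defining equation~\eqref{eq:cr-pursuit-single-online-decision} of \crp, we obtain
\begin{equation*}
g_t(\hat v_t)=\frac{1}{\pi}\bigl(OPT_S(t)-OPT_S(t-1)\bigr)\le \frac{1}{\pi}\,g_t(\delta_t).
\end{equation*}

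Next I would use concavity of $g_t$ together with $g_t(0)=0$ to pull the factor $1/\pi$ inside the argument. Writing $\tfrac{1}{\pi}\delta_t=\tfrac{1}{\pi}\cdot \delta_t+\bigl(1-\tfrac{1}{\pi}\bigr)\cdot 0$ and applying concavity yields $g_t\!\bigl(\tfrac{1}{\pi}\delta_t\bigr)\ge \tfrac{1}{\pi}\,g_t(\delta_t)$. Chaining this with the previous display gives $g_t(\hat v_t)\le g_t\!\bigl(\tfrac{1}{\pi}\delta_t\bigr)$, and monotonicity of $g_t$ on $[0,\delta_t]$ then delivers $\hat v_t\le \tfrac{1}{\pi}\delta_t$, as required. (Along the way I would implicitly assume $\pi\ge 1$ so that $\tfrac{1}{\pi}\delta_t\in[0,\delta_t]$; this is consistent with how \crp\ is used as a competitive-ratio parameter.)

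There is no serious obstacle here: the proof is essentially a two-line combination of (i) a routine ``marginal-value'' comparison between $OPT_S(t)$ and $OPT_S(t-1)$ and (ii) the elementary fact that a concave function through the origin satisfies $g(\alpha x)\ge \alpha g(x)$ for $\alpha\in[0,1]$. The only place one has to be slightly careful is ensuring the bound $g_t(v_t^\star)\le g_t(\delta_t)$, for which monotonicity of $g_t$ (a consequence of $g_t'\ge p_{\min}>0$) rather than mere concavity is what one invokes.
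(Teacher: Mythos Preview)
Your proof is correct and follows essentially the same approach as the paper: bound $OPT_S(t)-OPT_S(t-1)$ by $g_t(\delta_t)$, then use concavity through the origin and monotonicity to convert $g_t(\hat v_t)\le \tfrac{1}{\pi}g_t(\delta_t)$ into $\hat v_t\le \tfrac{1}{\pi}\delta_t$. The paper's version is simply terser, stating the key inequality and invoking ``increasing and concave'' without spelling out the intermediate step $g_t(\tfrac{1}{\pi}\delta_t)\ge \tfrac{1}{\pi}g_t(\delta_t)$ that you make explicit.
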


\begin{proof}
As $g_t(v_t)$ is increasing and concave function, and 

\begin{equation}
   g_t(\hat{v}_t) = \frac{1}{\pi} \left(OPT_S(t)-OPT_S(t-1)\right)\leq \frac{1}{\pi}\cdot g_t(\delta_t)
\end{equation}
We have $\hat{v}_t \leq \frac{1}{\pi}\cdot\delta_t$.
\end{proof}
Lemma~\ref{thm:single-feasible-competitive} shows a upper bound on the online allocation, which guarantees the existence of $\hat{v}_t$ at each slot. It will also be useful for our algorithm design and CR analysis for \probname, which we will discuss in Sec.~\ref{sec:online}.

\begin{lem}
\label{thm:single-feasible-competitive}
\crp is feasible and $\pi$-competitive for \textsf{{\probnameS}} if $\Phi(\pi)\leq C$.
\end{lem}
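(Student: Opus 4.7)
The plan is to split the argument into (i) per-slot well-definedness and global feasibility, and (ii) competitiveness via a telescoping identity; both parts fall out almost directly from the defining equation~\eqref{eq:cr-pursuit-single-online-decision}.

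First I would verify that $\hat{v}_t$ is well defined and satisfies every constraint. Non-negativity of the right-hand side of~\eqref{eq:cr-pursuit-single-online-decision} is immediate, since the feasible region of \probnameS~on slots $1,\ldots,t$ contains that on slots $1,\ldots,t-1$ (append $v_t=0$), hence $OPT_S(t)\ge OPT_S(t-1)$. For the upper bound, let $(v_1^\ast,\ldots,v_t^\ast)$ attain $OPT_S(t)$; its prefix is feasible for the truncated problem up to slot $t-1$, giving
\begin{equation*}
    OPT_S(t) - OPT_S(t-1) \;\le\; g_t(v_t^\ast) \;\le\; g_t(\delta_t).
\end{equation*}
Because $g_t$ is continuous and nondecreasing on $[0,\delta_t]$ with $g_t(0)=0$, the intermediate value theorem then yields $\hat{v}_t\in[0,\delta_t]$ solving~\eqref{eq:cr-pursuit-single-online-decision}, and Lemma~\ref{thm:single-rate-limit-feasibility} sharpens this to $\hat{v}_t\le \delta_t/\pi$. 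The capacity constraint is then handled by the hypothesis: using the worst-case definition~\eqref{eq:maximum_allocation_cr_pursuit} of $\Phi(\pi)$, one has $\sum_{t}\hat{v}_t \le \Phi(\pi) \le C$ for every input.

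For competitiveness, I would simply sum~\eqref{eq:cr-pursuit-single-online-decision} over $s=1,\ldots,t$; with the convention $OPT_S(0)=0$ the sum telescopes to
\begin{equation*}
    \sum_{s=1}^{t} g_s(\hat{v}_s) \;=\; \frac{1}{\pi}\,OPT_S(t).
\end{equation*}
Evaluating at $t=T$ and identifying $OPT_S(T)$ with the offline optimum $OPT(\sigma)$ on the realized input, the online objective equals $(1/\pi)\,OPT(\sigma)$, so $\mathcal{CR}(\crp) = \pi$ for every feasible input.

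The only genuinely nontrivial step, and the one I would write out most carefully, is the bound $OPT_S(t)-OPT_S(t-1)\le g_t(\delta_t)$ that legitimizes the defining equation of the algorithm; the rest is the telescoping sum and a direct invocation of $\Phi(\pi)\le C$. An implicit technical point worth flagging is that the target value in~\eqref{eq:cr-pursuit-single-online-decision} is computed online using only information revealed by slot $t$, so the identity $OPT_S(T)=OPT(\sigma)$ only needs to be invoked at the final slot to read off the competitive ratio rather than enforced along the way.
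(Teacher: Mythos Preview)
Your proposal is correct and follows essentially the same approach as the paper: invoke Lemma~\ref{thm:single-rate-limit-feasibility} for the rate constraint, use the hypothesis $\Phi(\pi)\le C$ directly for the capacity constraint, and telescope~\eqref{eq:cr-pursuit-single-online-decision} to obtain the $\pi$-competitive guarantee. Your treatment is in fact slightly more thorough than the paper's, since you spell out the well-definedness of $\hat{v}_t$ via the prefix-feasibility bound $OPT_S(t)-OPT_S(t-1)\le g_t(\delta_t)$ and the intermediate value theorem, whereas the paper leaves this implicit.
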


\begin{proof}
Considering an arbitrary input $\sigma$, we first note that it is clear that $\hat{v}_t\leq \delta_t$ according to Lemma~\ref{thm:single-rate-limit-feasibility}. Then if $ \Phi(\pi)\leq C$, we have $\sum_{t}\hat{v}_t\leq C$, i.e., it satisfies the inventory constraint under input $\sigma$.

Summarizing~\eqref{eq:cr-pursuit-single-online-decision} over all $t$, we have, the online objective
\begin{equation}
    ALG(\sigma) = \sum_{t=1}^{T} g_t(\hat{v}_t)= \frac{1}{\pi} OPT_S(T)=  \frac{1}{\pi}\cdot OPT_S(\sigma),
\end{equation}
where $OPT_S(\sigma)$ is the optimal offline objective. Thus the algorithm is $\pi$-competitive. 

\end{proof}


Lemma~\ref{thm:single-feasible-competitive} shows that we can rely on characterizing $\Phi(\pi)$ to optimize the choice of $\pi$ in \crp. Further, we can interpret $\Phi(\pi)$ as the inventory the online algorithm $\textsf{CR-Pursuit($\pi$)}$ needs to maintain $\pi$-competitive. For example, suppose for an online algorithm, we now can utilize $\Phi(1)$ capacity of the inventory while the capacity of the offline optimal remains $C$. Then we can run $\textsf{CR-Pursuit(1)}$ and achieve that same performance as the offline optimal, i.e., 1-competitive.


We have the following results on the upper bound on $\Phi(\pi)$.

\begin{lem}
\label{thm:upper-bound-class-I}
We have 
\begin{equation}
    \Phi(\pi)\leq \frac{\ln\theta+1}{\pi}\cdot C
\end{equation}
\end{lem}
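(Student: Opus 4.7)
The plan is to upper bound $\Phi(\pi)=\sup_{\sigma}\sum_t \hat v_t$ by rewriting each online allocation via an ``effective price'' and then reducing to a tractable worst-case instance whose value can be computed directly.

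First I would introduce, at each slot where $\hat v_t>0$, the effective price $p^{\mathrm{eff}}_t := g_t(\hat v_t)/\hat v_t$. Concavity of $g_t$ with $g_t(0)=0$ together with $g_t'(\cdot)\in[p_{\min},p_{\max}]$ force $p^{\mathrm{eff}}_t\in[p_{\min},p_{\max}]$. Combining this with the defining relation $\pi g_t(\hat v_t)=OPT_S(t)-OPT_S(t-1)$ gives $\hat v_t = (OPT_S(t)-OPT_S(t-1))/(\pi\, p^{\mathrm{eff}}_t)$, so the total online allocation becomes a weighted sum of $OPT_S$-increments with weights $1/p^{\mathrm{eff}}_t$.

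Second, I would argue via exchange arguments that the supremum is attained (in a continuous limit) by inputs in which every revenue function is linear, $g_t(v)=p_t v$, with slopes $p_t$ non-decreasing from $p_{\min}$ toward $p_{\max}$ and with rates chosen so the offline always fills the most recent $C$ time-units of inventory. Two reductions are needed: (i)~replacing any concave $g_t$ by the linear function of slope $p^{\mathrm{eff}}_t$ preserves feasibility (the new slope stays in $[p_{\min},p_{\max}]$) and can only weakly increase $\sum_t \hat v_t$; and (ii)~reordering the slots so that prices are non-decreasing can only weakly increase the adversary's payoff.

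Third, for this reduced instance, direct continuous-time computation with $OPT_S(t)=\int_{\max(0,\,t-C)}^{t} p(s)\,ds$ yields $\sum_t \hat v_t = C/\pi + (1/\pi)\int_C^{T}\!\bigl(1-p(t-C)/p(t)\bigr)\,dt$ for $T\ge C$. Applying the elementary inequality $1-x\le -\ln x$ on $(0,1]$ and a shift of variable gives
\[
\int_C^{T}\!\Bigl(1-\frac{p(t-C)}{p(t)}\Bigr)dt \;\le\; \int_{T-C}^{T}\!\ln p(t)\,dt \;-\; \int_{0}^{C}\!\ln p(t)\,dt \;\le\; C(\ln p_{\max}-\ln p_{\min}) \;=\; C\ln\theta,
\]
which delivers the bound $\Phi(\pi)\le C(\ln\theta+1)/\pi$.

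The main obstacle will be the worst-case reduction in the second step: rigorously justifying that neither non-linear concavity in $g_t$ nor non-monotone price orderings can strictly help the adversary. The concavity-to-linear exchange is delicate because modifying $g_t$ alters both the offline optimum at slot $t$ and, through the coupled $OPT_S$ sequence, all subsequent online allocations; any such exchange must be made with care to ensure the total $\sum_t \hat v_t$ does not drop.
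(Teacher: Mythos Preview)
Your Step~3 is correct and your identification of Step~2 as the crux is right; the problem is that the linearization exchange in Step~2(i) does not work as stated and is not merely ``delicate.'' Replacing a concave $g_t$ by the chord $v\mapsto p_t^{\mathrm{eff}} v$ lowers $g_t$ on $(0,\hat v_t)$ and raises it on $(\hat v_t,\delta_t]$ (because $g_t(v)/v$ is non-increasing), so the effect on every $OPT_S(\tau)$ with $\tau\ge t$ is ambiguous: it depends on where the offline optimum was drawing from slot $t$. Once $OPT_S(t)$ shifts, the online allocation $\hat v_t$ and hence $p_t^{\mathrm{eff}}$ itself shift, so the exchange is not even self-consistent, and there is no monotonicity to rescue it. The same circularity obstructs the reordering step~(ii), since reordering linear slots changes all the $OPT_S$ increments that determine the $\hat v_t$'s you are summing.

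The paper avoids this by a different reduction. Rather than replacing $g_t$ by one linear function, it \emph{splits} slot $t$ into several consecutive slots carrying the pieces $g_t^{(1)},\dots,g_t^{(k_t)}$ obtained by cutting $g_t$ at the abscissae where $g_t'$ has dropped by successive factors of $1{+}\epsilon$, each piece with its own rate limit. Because the pieces telescope back to $g_t$, the offline problem is literally unchanged and $OPT_S$ is preserved \emph{exactly} at every slot; a two-line concavity argument then shows that the online allocations to the pieces sum to at least the original $\hat v_t$. Every piece now satisfies $g'(0)\big/(g(\delta)/\delta)\le 1{+}\epsilon$, and for this near-linear class the paper invokes the bound $(1{+}\epsilon)(\ln\theta{+}1)C/\pi$ from~\cite{cr_pursuit2018}, then sends $\epsilon\to 0$. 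Your Step~3 is essentially that cited bound in the linear limit, so grafting it onto the paper's decomposition would yield a self-contained argument; the idea your plan is missing is precisely the split that keeps $OPT_S$ invariant while making the revenue functions near-linear.
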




We summarize the proof idea of Lemma~\ref{thm:upper-bound-class-I} here while leaving the detailed proof to Appendix~\ref{proof:upper-bound}. We first notice that a more general results in~\cite{cr_pursuit2018} can be extended to the case with rate limit constraint, as shown in Proposition~\ref{thm:upper-bound-general} in Appendix~\ref{proof:upper-bound}. Although the results in~\cite{cr_pursuit2018} do not cover the revenue functions we consider here, it covers the revenue functions in the maximizer of $\Phi(\pi)$ as special cases. This observation leads to Lemma~\ref{thm:upper-bound-class-I}.

According to the above discussion, we can provide the competitive analysis of \crp~ for \probnameS. 
\begin{thm}
\label{thm:single-competitive-class-I}
 For \textsf{{\probnameS}}, 
 \textsf{CR-Pursuit($\pi_{1}$)} is $\pi_{1}$-competitive, where $\pi_{1}=\ln\theta+1$. And, it is optimal  among all online algorithms for the problem. 
\end{thm}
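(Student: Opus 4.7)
The plan has two parts: first, show that \textsf{CR-Pursuit}($\pi_1$) with $\pi_1=\ln\theta+1$ is $\pi_1$-competitive; second, establish the matching lower bound, namely that no online algorithm for \probnameS\ attains a CR strictly smaller than $\ln\theta+1$.

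The upper bound drops out of the previous lemmas. Plugging $\pi=\pi_1$ into Lemma~\ref{thm:upper-bound-class-I} gives
\begin{equation*}
\Phi(\pi_1)\;\leq\;\frac{\ln\theta+1}{\pi_1}\,C\;=\;C,
\end{equation*}
so the hypothesis of Lemma~\ref{thm:single-feasible-competitive} is met, and hence \textsf{CR-Pursuit}($\pi_1$) is feasible and $\pi_1$-competitive on every $\sigma\in\Sigma$. No additional work is needed for this direction.

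For the optimality part, I would reduce to the linear sub-class by restricting to revenue functions $g_t(v)=p_t v$ with $p_t\in[p_{\min},p_{\max}]$ (which lie in $\mathcal{G}$) and to inputs with rate limit $\delta_t=C$ so that constraint~\eqref{eq:rate} never binds; this collapses \probnameS\ to classical one-way trading. I would then construct an adversarial family of prefix inputs $\{\sigma_k\}$ built on a price trajectory rising from $p_{\min}$ to $p_{\max}$, where $\sigma_k$ terminates at slot $k$. For any deterministic online algorithm with allocation $\{\hat v_t\}$, the offline optimum on $\sigma_k$ sells the entire capacity at the terminal price, giving $OPT(\sigma_k)=C p_k$, and $\pi$-competitiveness on every prefix forces $\sum_{t\leq k}p_t\hat v_t\geq C p_k/\pi$ for all $k$. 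Taking a continuum limit over the price parameter, this inequality compels the instantaneous allocation density to be at least $C/(\pi p)$ on $(p_{\min},p_{\max}]$ together with an initial atom of $C/\pi$ at $p_{\min}$; integrating over $[p_{\min},p_{\max}]$ then yields $\sum_t \hat v_t\geq \tfrac{C}{\pi}(\ln\theta+1)$. The capacity constraint $\sum_t \hat v_t\leq C$ therefore forces $\pi\geq\ln\theta+1$, matching the upper bound.

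The main technical subtlety is making the continuum/limiting argument tight and covering all admissible online algorithms, rather than just a convenient parametric class. The cleanest route, which I would take in the write-up, is to avoid re-deriving the calculus-of-variations step from scratch and instead invoke the classical linear one-way trading lower bound of~\cite{el2001optimal} (and its re-derivation in the \crp\ framework of~\cite{cr_pursuit2018}), after verifying that the adversarial instance embeds into $\Sigma$ for \probnameS\ (namely, each $g_t\in\mathcal{G}$ and choosing $\delta_t=C$ deactivates the rate constraint). This transfers the established $\ln\theta+1$ bound from the sub-problem to \probnameS\ and closes the gap at $\pi_1=\ln\theta+1$.
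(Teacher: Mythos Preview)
Your proposal is correct and essentially mirrors the paper's own argument: the upper bound is obtained exactly as you do, by plugging $\pi=\pi_1$ into Lemma~\ref{thm:upper-bound-class-I} and invoking Lemma~\ref{thm:single-feasible-competitive}, and for optimality the paper likewise appeals to the known $\ln\theta+1$ lower bound from~\cite{cr_pursuit2018,Sun2020Competitive}. Your additional sketch of the one-way trading adversary (linear $g_t$, $\delta_t=C$) is a faithful elaboration of what those references contain, but it is not required beyond the citation.
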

According to Lemma~\ref{thm:single-feasible-competitive} and Lemma~\ref{thm:upper-bound-class-I}, it is clear that \textsf{CR-Pursuit($\pi_{1}$)} is feasible and $\pi_{1}$-competitive.  Further, according to the results in~\cite{cr_pursuit2018,Sun2020Competitive}, we know that $\ln\theta+1$ is the lower bound or the optimal CR to \probnameS. 
 Thus, \textsf{CR-Pursuit($\pi_{1}$)} is also optimal.



\section{Online Algorithms for Multiple Inventory Problem}\label{sec:online}
In this section, we introduce our divide-and-conquer online algorithm \textsf{\algname($\pi$)} for \probname, where $\pi$ is a parameter to be specified. We first outline the algorithm structure. Following the structure, we then propose our general online algorithm for arbitrary $N$. We next show a simple and optimal online algorithm when $N$ is relatively small. Finally, we summarize our algorithm and provide the competitive analysis. An illustration of our approach and results is shown in Fig~\ref{fig:flowchart}.  

\subsection{Algorithm structure}\label{sec:alg-structure}
We consider a divide-and-conquer approach for deriving online algorithms for \probname. The general idea is that we can optimize \probname~ by first allocating the allowance at each slot to the inventories and then separately optimizing the allocation of each inventory given the allocated allowance. More specifically, we define the following subproblem for each $i\in[N]$,

\begin{align}
   \textsf{${\probname_i}$}: \; \max & \sum_{t} \tilde{g}_{i,t}(v_{i,t})\\
    s.t. & \sum_{t} v_{i,t}\leq C_i\\
         & 0 \leq v_{i,t}\leq a_{i,t}, \forall t,
\end{align}
where $a_{i,t}$ is the allocated allowance to user $i$ at slot $t$. $\tilde{g}_{i,t}(v_{i,t})$ is another algorithmic design space that allows us to exploit the online augmentation scenario when allocating the allowance allocation in Sec.~\ref{sec:Step-I}. Under the offline setting, we note that such a decomposition is of no optimality loss. For example, we can choose $\tilde{g}_{i,t}(v_{i,t})=g_{i,t}(v_{i,t})$  and set the allowance allocation  $a_{i,t}=v^*_{i,t}$ for all $i$ and $t$, where ${\{v^*_{i,t}\}}_{i\in[N], t\in[T]}$ is the offline optimal solution of \probname. Then, optimizing the subproblems separately given the allowances would reproduce the offline optimal solution. 

Following this structure, we can design an online algorithm that mainly consists of two steps at each slot $t$,
\begin{enumerate}
    \item \textsf{Step-I:} Determine the allowance allocation, $\{\hat{a}_{i,t}\}_{i\in[N]}$, irrevocably. 
    \item \textsf{Step-II:} Determine the the inventory allocation for each online \textsf{${\probname_i}$}, $\{\hat{v}_{i,t}\}_{i\in[N]}$, irrevocably.
\end{enumerate}
We note that this divide-and-conquer approach allows us to separately tackle the two main challenges of the problem. First, the revenue functions come online while the allocation across the decision period is coupled due to the capacity constraint for each inventory, which is mainly handled by \textsf{Step-II}. 
Second, the online allowance constraints and the rate constraints couple the decisions across the inventories, which we tackle in \textsf{Step-I}. We can directly apply $\{\hat{v}_{i,t}\}_{t\in[T]}$ as the output of an online algorithm for each inventory $i$. We can view the \textsf{Step-II} as solving \textsf{${\probname_i}$} in an online manner. 
More specifically, at each $t$, for each $i$, we observe input $\hat{a}_{i,t}$ (determined at \textsf{Step-I}) and $\tilde{g}_{i,t}(\cdot)$, and need to determine $v_{i,t}$ irrevocably.  In terms of feasibility, an immediate advantage is that, it satisfies the inventory constraint~\eqref{eq:inventory} if it is a feasible solution to \textsf{${\probname_i}$}. However, we need further care to make sure the satisfaction of the allowance constraint~\eqref{eq:allowance} and allocation rate limit~\eqref{eq:rate}. As for performance guarantees, we can first analyze the performance of the each step and then combine them to show the overall competitive analysis. In the following, we will discuss how the proposed online algorithm behaviors at both steps to ensure the feasibility and achieve a close-to-optimal CR.



\begin{figure}
    \centering
    \includegraphics[width=0.7\textwidth]{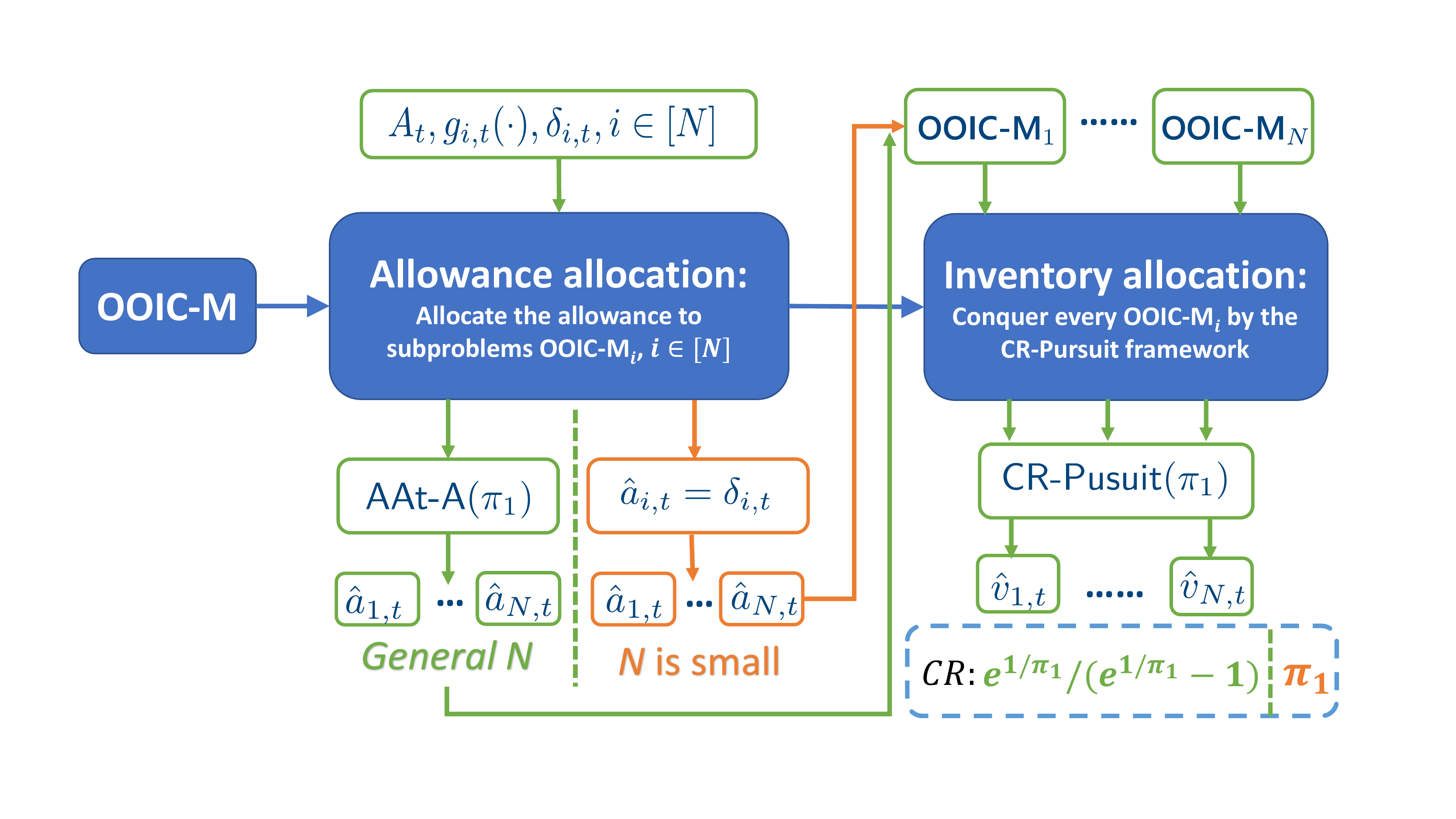}
    \caption{An Illustration of Our Divide-and-Conquer Approach and Results.}
    \label{fig:flowchart}
\end{figure}


\subsection{The \textsf{A\&P$_l$($\pi$)} Algorithm for General $N$}\label{sec:general-N}
In this subsection, we will propose an online algorithm for general number of inventories following the divide-and-conquer structure discussed in Sec.~\ref{sec:alg-structure}.  We denote the algorithm as \textsf{A\&P$_l$($\pi$)}, where $\pi$ is parameter to be specified. In the following, we first introduce the \textsf{Step-II} of \textsf{A\&P$_l$($\pi$)}, determining the allocation of \textsf{\probname$_i$} given the allowance from \textsf{Step-I}. We then introduce \textsf{Step-I}, determining the allowance of each inventories. We denote the allocated allowance from  \textsf{Step-I} as $\hat{a}_{i,t},\forall i,t$.

\subsubsection{\textsf{Step-II} of \textsf{A\&P$_l$($\pi$)}}\label{sec:Step-II}
In this step, \textsf{A\&P$_l$($\pi$)} determines the online inventory allocation for each \textsf{${\probname_i}$} given the allocated allowance (denoted as $\{\hat{a}_{i,t}\}_{i\in[N]}$) from \textsf{Step-I}. 



In \textsf{Step-II} of \textsf{A\&P$_l$($\pi$)}, it sets 
\begin{equation}
\label{eq:choose-of-g}
    \tilde{g}_{i,t}(v_{i,t}) = \pi\cdot g_{i,t}(\frac{v_{i,t}}{\pi}).
\end{equation}

We denote the optimal objective of \textsf{${\probname_i}$} given its input up to slot $t$ as $\tilde{OPT}_{i,t}$. We set $\tilde{OPT}_{i,0}=0$. At each slot $t$, it determines the allocation $\hat{v}_{i,t}$ such that it satisfies
\begin{equation}
    g_{i,t}(\hat{v}_{i,t}) = \frac{1}{\pi} \left(\tilde{OPT}_{i,t}-\tilde{OPT}_{i,t-1}\right).\label{eq:subproblem-crpursuit}
\end{equation}


While it looks similar to the \textsf{CR-Pursuit($\pi$)} algorithm discussed in  Sec.~\ref{sec:single} and~\cite{cr_pursuit2018}, we note that a major difference is that we are using the original revenue function $g_{i,t}(\cdot)$ to pursuit a fraction of $1/\pi$ of the optimal objective achieved over the revenue function $\tilde{g}_{i,t}$ instead of $g_{i,t}(\cdot)$. In general, $\tilde{g}_{i,t}$  defined in~\eqref{eq:choose-of-g} is no less than $g_{i,t}(\cdot)$ (take equality under the linear function case). Thus, it may be more difficult for the $\textsf{Step-II}$ to achieve the same performance ratio ($\pi_1$) 
between the optimal objective to the online objective for each \textsf{${\probname_i}$} as that in Sec.~\ref{sec:single}.  However, we would show the performance ratio remains achievable in Lemma~\ref{thm:step-II-class-I}. 
 The design of $\tilde{g}_{i,t}(\cdot)$ is important for achieving a better approximation ratio between the total optimal objective of the subproblems and the optimal objective of {\probname}, which we would discuss in \textsf{Step-I}, Sec.~\ref{sec:Step-I}. 

To analyze \textsf{Step-II}, we first propose the following proposition on the properties of the online allocation $\hat{v}_{i,t}, \forall i,t$.
\begin{lem}
\label{thm:step-II-online-bound}
We have $\hat{v}_{i,t}\leq \frac{1}{\pi} \cdot \hat{a}_{i,t},\forall i,t$.
\end{lem}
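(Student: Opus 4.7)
The plan is to mimic the single-inventory argument of Lemma~\ref{thm:single-rate-limit-feasibility} but applied at the level of the subproblem \textsf{$\probname_i$}, leveraging how $\tilde{g}_{i,t}$ is defined in~\eqref{eq:choose-of-g}. Concretely, the defining equation~\eqref{eq:subproblem-crpursuit} expresses $g_{i,t}(\hat{v}_{i,t})$ as a $1/\pi$ share of the one-step increment $\tilde{OPT}_{i,t}-\tilde{OPT}_{i,t-1}$, so bounding this increment from above by $\tilde{g}_{i,t}(\hat{a}_{i,t})$ will translate directly, via monotonicity of $g_{i,t}$, into the desired inequality.

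First I would establish the key monotonicity estimate
\[
\tilde{OPT}_{i,t}-\tilde{OPT}_{i,t-1} \leq \tilde{g}_{i,t}(\hat{a}_{i,t}).
\]
To see this, take an offline optimal solution $\{v^*_{i,s}\}_{s\leq t}$ attaining $\tilde{OPT}_{i,t}$ in the subproblem \textsf{$\probname_i$} restricted to the first $t$ slots (with allowances $\hat{a}_{i,1},\dots,\hat{a}_{i,t}$ already fixed by \textsf{Step-I}). Dropping the last-slot action leaves $\{v^*_{i,s}\}_{s\leq t-1}$, which is still feasible for the same subproblem restricted to the first $t-1$ slots (the capacity constraint~\eqref{eq:inventory} and the per-slot allowance/bound constraints remain satisfied). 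Hence $\tilde{OPT}_{i,t-1}\geq \tilde{OPT}_{i,t}-\tilde{g}_{i,t}(v^*_{i,t})$, and since $v^*_{i,t}\leq \hat{a}_{i,t}$ and $\tilde{g}_{i,t}$ is nondecreasing (inherited from $g_{i,t}$ being nondecreasing under the $\mathcal{G}$-assumption $p_{\min}>0$), we conclude $\tilde{OPT}_{i,t}-\tilde{OPT}_{i,t-1}\leq \tilde{g}_{i,t}(\hat{a}_{i,t})$.

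Second, I would substitute the definition $\tilde{g}_{i,t}(v)=\pi\cdot g_{i,t}(v/\pi)$ from~\eqref{eq:choose-of-g} and combine with~\eqref{eq:subproblem-crpursuit}:
\[
g_{i,t}(\hat{v}_{i,t})
=\frac{1}{\pi}\bigl(\tilde{OPT}_{i,t}-\tilde{OPT}_{i,t-1}\bigr)
\leq \frac{1}{\pi}\cdot \pi\, g_{i,t}\!\left(\tfrac{\hat{a}_{i,t}}{\pi}\right)
= g_{i,t}\!\left(\tfrac{\hat{a}_{i,t}}{\pi}\right).
\]
Since $g_{i,t}$ is strictly increasing on its domain (its derivative lies in $[p_{\min},p_{\max}]$ with $p_{\min}>0$), inverting yields $\hat{v}_{i,t}\leq \hat{a}_{i,t}/\pi$, which is exactly the claim.

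The only real delicacy is the first step, where one must be careful that the allowance sequence $\{\hat{a}_{i,s}\}$ has already been committed by \textsf{Step-I} and that truncating the optimal policy for the length-$t$ subproblem gives a feasible policy for the length-$(t-1)$ subproblem over the \emph{same} committed allowances; this is immediate because none of the constraints in \textsf{$\probname_i$} couple the discarded slot $t$ to earlier slots once $\hat{a}_{i,\cdot}$ are frozen. Everything else is a direct translation via the scaling identity in~\eqref{eq:choose-of-g} and the monotonicity of $g_{i,t}$, so I expect the proof to be short (a few lines) and structurally parallel to Lemma~\ref{thm:single-rate-limit-feasibility}.
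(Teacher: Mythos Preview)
Your proposal is correct and follows essentially the same approach as the paper's proof: bound the one-step increment $\tilde{OPT}_{i,t}-\tilde{OPT}_{i,t-1}$ by $\tilde{g}_{i,t}(\hat{a}_{i,t})=\pi\,g_{i,t}(\hat{a}_{i,t}/\pi)$, plug into~\eqref{eq:subproblem-crpursuit}, and invert via monotonicity of $g_{i,t}$. If anything, you spell out the truncation argument for the increment bound more explicitly than the paper does.
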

\begin{proof}
We have 
\begin{equation}
    g_{i,t}(\hat{v}_{i,t}) = \frac{1}{\pi} \left(\tilde{OPT}_{i,t}-\tilde{OPT}_{i,t-1}\right)\leq\frac{1}{\pi}\cdot \pi\cdot g_{i,t}(\frac{\hat{a}_{i,t}}{\pi})\leq g_{i,t}(\frac{\hat{a}_{i,t}}{\pi})
\end{equation}
Thus, as $g_{i,t}(\cdot)$ is increasing, we conclude $\hat{v}_{i,t}\leq \frac{1}{\pi} \cdot \hat{a}_{i,t}$.
\end{proof}
While this is a simple observation on the online solution, it plays an important role on designing the allowance allocation in \textsf{Step-I}, Sec.~\ref{sec:Step-I}  and improving the overall performance of our online algorithm \textsf{\algname$_l$($\pi$)} for \probname.
 
We denote the objective value of the online solution to \textsf{\probname$_i$} at slot $t$ as $\eta_{i,t}$, 
\begin{equation}
\label{eq:step2-online-ojective}
    \eta_{i,t}  \triangleq  \sum_{\tau=1}^{t} g_{i,\tau}(\hat{v}_{i,\tau})
\end{equation}

We provide performance analysis of \textsf{Step-II} in the following lemma. Recall that $\pi_1=\ln\theta+1$.

\begin{lem}
\label{thm:step-II-class-I}
 We have that for each $i\in[N]$, \textsf{Step-II} of \algname($\pi_{1}$) always produces a feasible solution to  \textsf{${\probname_i}$}, 
 and for any slot $t$,  
 the online objective
\begin{equation}
\label{eq:step-II-guarantee}
    \eta_{i,t}\geq \frac{1}{\pi_{1}}\cdot \tilde{OPT}_{i,t}.
\end{equation}
\end{lem}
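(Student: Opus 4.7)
The plan is to recognize Step-II of $\algname_l(\pi_1)$ applied to subproblem $\probname_i$ as essentially an instance of $\crp$ with $\pi=1$ on a rescaled single-inventory problem, and then inherit the guarantees already established in Section~\ref{sec:single}. With this identification, both the competitive-ratio inequality and the feasibility properties will follow by invoking Lemma~\ref{thm:single-feasible-competitive} and Lemma~\ref{thm:upper-bound-class-I}, with the only nontrivial work being a change of variables.

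First, I would establish that at every slot $t$, the defining equation (\ref{eq:subproblem-crpursuit}) admits a valid $\hat{v}_{i,t}\in[0,\hat{a}_{i,t}/\pi_1]$. The key observation is that adding a single slot with revenue $\tilde{g}_{i,t}$ and rate limit $\hat{a}_{i,t}$ can raise the offline optimum of $\probname_i$ by at most the maximum per-slot revenue, i.e.\ $\tilde{OPT}_{i,t}-\tilde{OPT}_{i,t-1}\leq \tilde{g}_{i,t}(\hat{a}_{i,t}) = \pi_1\, g_{i,t}(\hat{a}_{i,t}/\pi_1)$. Dividing by $\pi_1$ shows the right-hand side of (\ref{eq:subproblem-crpursuit}) lies in $[0,g_{i,t}(\hat{a}_{i,t}/\pi_1)]$, so continuity and monotonicity of $g_{i,t}$ (together with $g_{i,t}(0)=0$) yield existence of $\hat{v}_{i,t}$ in $[0,\hat{a}_{i,t}/\pi_1]$. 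The competitive-ratio bound (\ref{eq:step-II-guarantee}) is then immediate by telescoping (\ref{eq:subproblem-crpursuit}) over $\tau=1,\dots,t$, giving in fact the equality $\eta_{i,t}=\frac{1}{\pi_1}\tilde{OPT}_{i,t}$.

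It remains to verify feasibility for $\probname_i$. The rate constraint $\hat{v}_{i,t}\leq \hat{a}_{i,t}$ is immediate from the first step since $\pi_1\geq 1$. For the capacity constraint, I would introduce $u_{i,t}\triangleq \pi_1\hat{v}_{i,t}$. Plugging into (\ref{eq:subproblem-crpursuit}) and using the definition $\tilde{g}_{i,t}(v)=\pi_1 g_{i,t}(v/\pi_1)$ rearranges to
\begin{equation}
    \tilde{g}_{i,t}(u_{i,t}) \;=\; \tilde{OPT}_{i,t}-\tilde{OPT}_{i,t-1},
\end{equation}
which is exactly the decision rule of $\crp$ with $\pi=1$ applied to the single-inventory instance with revenues $\{\tilde{g}_{i,\tau}\}$, capacity $C_i$, and rate limits $\{\hat{a}_{i,\tau}\}$. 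A direct check shows $\tilde{g}_{i,t}\in\mathcal{G}$ with the same gradient range $[p_{\min},p_{\max}]$ (so the same $\theta$ and $\pi_1$). Lemma~\ref{thm:upper-bound-class-I} therefore gives $\sum_{\tau}u_{i,\tau}\leq\Phi(1)\leq \pi_1 C_i$, equivalently $\sum_{\tau}\hat{v}_{i,\tau}\leq C_i$, proving capacity feasibility.

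The main obstacle is this last feasibility step: the straightforward telescoping that yields the CR does not by itself control total consumption. The right move is the rescaling $u_{i,t}=\pi_1 \hat{v}_{i,t}$, which transports the multiple-inventory subproblem into exactly the form analyzed in Section~\ref{sec:single}, allowing Lemma~\ref{thm:upper-bound-class-I}'s $\Phi(\pi)$ bound to do the work. Everything else (existence of $\hat{v}_{i,t}$, the competitive inequality, and the rate-limit feasibility) follows by routine telescoping and monotonicity arguments.
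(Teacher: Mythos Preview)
Your proposal is correct and follows essentially the same approach as the paper: telescoping (\ref{eq:subproblem-crpursuit}) for the performance guarantee, and for feasibility a change of variables that identifies Step-II with \textsf{CR-Pursuit($1$)} on a rescaled single-inventory instance so that Lemma~\ref{thm:upper-bound-class-I} controls the total allocation. The only cosmetic difference is the direction of the rescaling: you set $u_{i,t}=\pi_1\hat{v}_{i,t}$ and work with $\tilde g_{i,t}$ and capacity $C_i$, while the paper sets $z_{i,t}=v_{i,t}/\pi_1$ and works with $g_{i,t}$ and capacity $C_i/\pi_1$; these are equivalent.
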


\begin{proof}
The performance guarantee in~\eqref{eq:step-II-guarantee} is simply implied by~\eqref{eq:subproblem-crpursuit} when choosing $\pi=\pi_{1}$. 

We now show the feasibility. We note that, when choosing $\pi=\pi_{1}$,  \textsf{${\probname_i}$} with $\tilde{g}_{i,t}(\cdot)$ determined by~\eqref{eq:choose-of-g} and a factor of $\frac{1}{\pi_{1}}$ in objective value is equivalent to 
\begin{align}
   \textsf{R-${\probname_i}$}: \; \max & \sum_{t} g_{i,t}(z_{i,t})\\
    s.t. & \sum_{t} z_{i,t}\leq \frac{C_i}{\pi_{1}}\\
         & 0 \leq z_{i,t}\leq \frac{\hat{a}_{i,t}}{\pi_{1}}, \forall t
\end{align}
, where $z_{i,t} \triangleq v_{i,t}/\pi_1$. Then, determining the online allocation according to~\eqref{eq:subproblem-crpursuit} is equivalent to find $\hat{v}_{i,t}$ such that
\begin{equation}
\label{eq:step-II-equivalent-v}
    g_{i,t}(\hat{v}_{i,t}) = \hat{OPT}_{i,t}-\hat{OPT}_{i,t-1},
\end{equation}
where $\hat{OPT}_{i,t}$ is optimal objective of \textsf{R-${\probname_i}$} at slot $t$. It is clear that $\hat{v}_{i,t}\leq a_{i,t}/\pi_{1}$ as $\hat{OPT}_{i,t}-\hat{OPT}_{i,t-1}\leq g_{i,t}(\hat{a}_{i,t}/\pi_{1})$. Thus, the rate limit constraint in \textsf{${\probname_i}$} is satisfied.  

We note that the \textsf{R-${\probname_i}$} is a single inventory problem we discuss in Sec.~\ref{sec:single} with inventory capacity of $\frac{C_i}{\pi_1}$. The online decision we make according to~\eqref{eq:step-II-equivalent-v} suggests that we are running \textsf{CR-Pursuit($1$)} over online \textsf{R-${\probname_i}$}. According to Lemma~\ref{thm:upper-bound-class-I}, for \textsf{R-${\probname_i}$}, we have 
\begin{equation}
    \sum_{t}\hat{v}_{i,t}\leq \frac{\ln\theta+1}{1}\cdot \frac{C_i}{\pi_{1}}= C_i
\end{equation}
, noting that the inventory capacity of \textsf{R-${\probname_i}$} equals $C_i/\pi_{1}$. Thus, the online solution satisfies the capacity constraint in  \textsf{${\probname_i}$}.
\end{proof}

\subsubsection{\textsf{Step-I} of \textsf{A\&P$_l$($\pi$)}}\label{sec:Step-I}

The \textsf{Step-I} is to determine $\{\hat{a}_{i,t}\}_{i\in[N]}$, the allowance allocation to different inventories at each slot $t$. Our goal is to determine an allocation such that we can guarantee the a larger approximation ratio ($\triangleq\alpha$) between $\sum_{i\in[N]} \tilde{OPT}_{i,t}$ and $OPT_t$ at any slot $t$, i.e., $\sum_{i\in[N]} OPT_{i,t} \geq \alpha \cdot OPT_t$. Recall that $\tilde{OPT}_{i,t}$ is the optimal objective of \textsf{${\probname_i}$} up to slot $t$. And, $OPT_t$ is the optimal objective of \probname~up to slot $t$. 


As discussed in Sec.~\ref{sec:alg-structure}, we need further consideration to guarantee the satisfaction of the allowance constraint~\eqref{eq:allowance} and allocation rate limit~\eqref{eq:rate}. We characterize a sufficient condition on the allowance allocation such that the online solution $\{\hat{v}_{i,t}\}_{i\in[N]}$ determined at \textsf{Step-II} (as discussed in Sec.~\ref{sec:Step-II}) satisfies constraints~\eqref{eq:allowance} and ~\eqref{eq:rate}.

\begin{lem}
\label{thm:allowance-allocation-constraint}
If the allowance allocation at each slot $t$ satisfies
\begin{equation}
\label{eq:condition-for-feasibility}
   \sum_{i\in[N]} \hat{a}_{i,t}\leq \pi\cdot A_t, 0\leq \hat{a}_{i,t}\leq \pi \cdot \delta_{i,t},
\end{equation}
then the online solution $\{\hat{v}_{i,t}\}_{i\in[N]}$ determined by~\eqref{eq:subproblem-crpursuit} at \textsf{Step-II} satisfies the allowance constraint~\eqref{eq:allowance} and rate limit constraints~\eqref{eq:rate}.
\end{lem}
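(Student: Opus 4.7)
The plan is to reduce the entire claim to a direct application of Lemma~\ref{thm:step-II-online-bound}, which has already established the crucial per-coordinate bound $\hat{v}_{i,t} \leq \frac{1}{\pi}\,\hat{a}_{i,t}$ for every $i$ and $t$. The hypothesis~\eqref{eq:condition-for-feasibility} is, in a sense, the ``pre-image'' under this contraction of precisely the feasibility constraints~\eqref{eq:allowance} and~\eqref{eq:rate} we want to verify, so the whole argument is two short inequality chains.

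First I would verify the rate-limit constraint~\eqref{eq:rate}: fixing any $i$ and $t$, chain Lemma~\ref{thm:step-II-online-bound} with the hypothesis $\hat{a}_{i,t}\leq \pi\cdot\delta_{i,t}$ to get $\hat{v}_{i,t}\leq \tfrac{1}{\pi}\hat{a}_{i,t}\leq \delta_{i,t}$, and nonnegativity is immediate from~\eqref{eq:subproblem-crpursuit} since $g_{i,t}(\cdot)$ is increasing with $g_{i,t}(0)=0$ and the right-hand side of~\eqref{eq:subproblem-crpursuit} is nonnegative. Next I would verify the allowance constraint~\eqref{eq:allowance}: sum the bound from Lemma~\ref{thm:step-II-online-bound} over $i\in[N]$ to get
\[
\sum_{i\in[N]} \hat{v}_{i,t} \;\leq\; \frac{1}{\pi}\sum_{i\in[N]} \hat{a}_{i,t} \;\leq\; \frac{1}{\pi}\cdot\pi\cdot A_t \;=\; A_t,
\]
invoking the second part of the hypothesis in the final step. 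That closes both feasibility claims.

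There is essentially no obstacle here; the real work was already done in Lemma~\ref{thm:step-II-online-bound}, where the design choice $\tilde{g}_{i,t}(v)=\pi\cdot g_{i,t}(v/\pi)$ in~\eqref{eq:choose-of-g} was arranged precisely so that the per-round online allocation is at most $1/\pi$ of the allocated allowance. The only subtlety worth pointing out in the write-up is why $\hat{v}_{i,t}$ is well-defined and nonnegative: since $\tilde{OPT}_{i,t}$ is nondecreasing in $t$ (more input only relaxes the feasible region of \textsf{${\probname_i}$}) and bounded above by $\pi\cdot g_{i,t}(\hat{a}_{i,t}/\pi)+\tilde{OPT}_{i,t-1}$ (take the incumbent optimizer at $t-1$ and append $\hat{a}_{i,t}/\pi$ at slot $t$), the right-hand side of~\eqref{eq:subproblem-crpursuit} lies in $[0,g_{i,t}(\hat{a}_{i,t}/\pi)]$, so a unique $\hat{v}_{i,t}\in[0,\hat{a}_{i,t}/\pi]$ exists by continuity and monotonicity of $g_{i,t}$. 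Once this is observed, the two bullet verifications above immediately yield the lemma.
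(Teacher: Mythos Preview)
Your proposal is correct and matches the paper's own argument essentially verbatim: invoke Lemma~\ref{thm:step-II-online-bound} to get $\hat{v}_{i,t}\le \tfrac{1}{\pi}\hat{a}_{i,t}$, then chain with the hypothesis~\eqref{eq:condition-for-feasibility} to obtain both~\eqref{eq:allowance} and~\eqref{eq:rate}. One minor quibble in your aside on well-definedness: the parenthetical ``take the incumbent optimizer at $t-1$ and append $\hat{a}_{i,t}/\pi$'' gives a \emph{lower} bound on $\tilde{OPT}_{i,t}$, not the upper bound you need; the correct justification for $\tilde{OPT}_{i,t}-\tilde{OPT}_{i,t-1}\le \tilde{g}_{i,t}(\hat{a}_{i,t})$ is simply that adding a new slot with rate cap $\hat{a}_{i,t}$ can raise the optimum by at most $\tilde{g}_{i,t}(\hat{a}_{i,t})$, exactly as in the proof of Lemma~\ref{thm:step-II-online-bound}.
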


The idea is that according to Lemma~\ref{thm:step-II-online-bound}, $\hat{v}_{i,t}\leq \frac{1}{\pi}\hat{a}_{i,t}$. Together with~\eqref{eq:condition-for-feasibility}, it implies that we have $\sum_{i\in[N]} \hat{v}_{i,t}\leq  \frac{1}{\pi}\sum_i \hat{a}_{i,t}\leq A_t$ and $\hat{v}_{i,t}\leq\delta_{i,t}$. Lemma~\ref{thm:allowance-allocation-constraint} means that at each slot $t$, we can actually allocate $\pi$-time more total allowance to the subproblems while guaranteeing the online solution satisfies constraints~\eqref{eq:allowance} and ~\eqref{eq:rate}. We would show that it can help us significantly improve the approximation ratio $\alpha$ compared with allocating the allowance respecting constraints~\eqref{eq:allowance} and ~\eqref{eq:rate} directly. 

In the online literature, the \textsf{Step-I} problem is similar to the online ads allocation problem with free disposal studied in~\cite{Feldman2009Online} where each advertisers can be allocated more ads display slots than their pre-agreed numbers but would only count the most valuable ones. In \textsf{Step-I}, while the total allowance we allocate to inventory $i$ may exceed its capacity $C_i$, but the total revenue of inventory $i$, $\tilde{OPT}_{i,t}$, only counts the most valuable $C_i$ of them. Compared with~\cite{Feldman2009Online}, we further consider the setting that online decision maker can allocate $\pi$-time more allowance with a $\pi$-time relaxer rate limit constraint at each slot. We call it \emph{allowance augmentation scenario}.  We note that this is different with other works in online literature, e.g.,~\cite{azar2006maximizing,Deng2018Capacity,Deng2019Capacity}, where the authors consider the capacity augmentation scenario, i.e., how one can improve the performance guarantee (in particular, CR) of online algorithms when the online decision maker is equipped with more inventory capacity compared with the offline optimal. 
Here, we are the first one to consider allowance augmentation scenario. 

At \textsf{Step-I} of \textsf{A\&P$_l$($\pi$)}, we determine the allowance allocation by solving the following problem at each $t$. We call the problem  \textsf{\allallt-A($\pi$)} standing for Allowance Allocation at slot $t$ with Augmentation.

\begin{align}
   \textsf{\allallt-A($\pi$)}: \quad {\max}\quad & \sum_i\left(\tilde{g}_{i,t}( \hat{a}_{i,t})-\int_0^{\hat{a}_{i,t}}\Psi_{i,t}(a)\; da\right) \label{eq:pseudo-revenue-function}\\
    \text{s.t.} \quad & \sum_{i} \hat{a}_{i,t} \leq \pi\cdot A_t \label{eq:AAt-allowance-C} \\
    &  0\leq \hat{a}_{i,t} \leq \pi\cdot \delta_{i,t},\forall  i\in[N] \label{eq:AAt-rate-C} 
\end{align}
In  \textsf{\allallt-A($\pi$)}, $\{\hat{a}_{i,t}\}_{i\in[N]}$ is the allowance allocation at slot $t$. $\tilde{g}_{i,t}(\cdot)$ is defined as in~\eqref{eq:choose-of-g}. $\Psi_{i,t}(a)$  is define as follows.
\begin{equation}\label{eq:defn_Phi-C} 
\Psi_{i,t}(a) = f_i(C_i)\cdot G_{i,t}(C_i,a)-\frac{1}{\pi\cdot C_i}\int_0^{C_i} G_{i,t}\left(x,a\right)\cdot f_i(x)\; dx.  
\end{equation}
where  $f_i(x)$ and $G_i(x,a)$ are defined as,
\begin{equation}
\label{eq:defn_f-C}
    f_i(x)=\frac{1}{\pi\cdot C_i}\frac{1}{e^{\frac{1}{\pi}}-1}\cdot e^{x/{\left(\pi\cdot C_i\right)}},
\end{equation}

\begin{align}
G_{i,t}\left(x,a\right) = \max\quad & \sum_{\tau\in[t]}\tilde{g}_{i,\tau}( v_{i,\tau}) \label{eq:defn_G-C}\\
\text{s.t.}\quad  & \sum_{\tau\in[t]} v_{i,\tau}\leq x\label{eq:dfn_G_x}\\
& 0\leq v_{i,t}\leq a\\
 &0\leq v_{i,\tau}\leq \hat{a}_{i,\tau}, \forall \tau\in [t-1].
\end{align}

We can show that  \textsf{\allallt-A($\pi$)} is a convex optimization problem with simple linear packing constraints by checking that $\Psi_{i,t}(a)$ is non-decreasing in $a$ (shown in Proposition~\ref{thm:psi_increment} in Appendix~\ref{app:thm:step-I}). We can solve it using projected gradient descent where at each step an evaluation of $\Psi_{i,t}(a)$ is required. Although we do not have a close form of $G_{i,t}(\cdot)$, we can evaluate $\Psi_{i,t}(a)$ efficiently using numerical integration methods. 

Our algorithm in \textsf{Step-I}, i.e., solving \textsf{\allallt-A($\pi$)}, can be view as a continuous counterpart of the exponential weighting approach proposed in~\cite{Feldman2009Online}. $f_{i,t}(\cdot)$ defines the weight on per-unit revenue of $G_{i,t}(x,a)$, the optimal allocation of inventory $i$. $\Phi_{i,t}(a)$ is the exponential weighting total revenue of the optimal allocation of inventory $i$, which matches the dynamic threshold defined in~\cite{Feldman2009Online}. 
Here, we provide two novel understandings for exploiting the augmentation scenario. First, we redesign the weighted function $f_{i,t}(\cdot)$ which tunes the weight according to the allowance augmentation level $\pi$. Second, we design the revenue function $\tilde{g}_{i,t}(v_{i,t})=\pi\cdot g_{i,t}(v_{i,t}/\pi)$ to ensure that increasing the allowance allocation for inventory $i$ can substantially increase the revenue when compared with $g_{i,t}(v_{i,t})$ in the offline problem. If we simply apply $g_{i,t}(v_{i,t})$ directly, we would suffer the diminishing return of concave function and can not fully exploit the augmentation.  

We can show the approximation guarantee of \textsf{\allallt-A($\pi$)} in the following theorem.

\begin{thm}
 \label{thm:step-I}
 Given the allowance allocation $\hat{a}_{i,t}$ by solving \textsf{\allallt-A($\pi$)}, we have
 \begin{equation}
     \sum_{i\in[N]} \tilde{OPT}_{i,t} \geq \alpha(\pi) \cdot OPT_t,
 \end{equation}
 where $\alpha(\pi) = \pi\frac{e^\frac{1}{\pi}-1}{e^\frac{1}{\pi}}$. Furthermore, $\alpha(\pi)$ equals $\frac{e-1}{e}$ when $\pi=1$ and $1$ when $\pi\to\infty$.
\end{thm}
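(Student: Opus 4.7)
The plan is to combine a per-slot optimality inequality for \textsf{\allallt-A($\pi$)} with a telescoping/potential argument that exploits the exponential weighting $f_i$ and the design $\tilde{g}_{i,t}(v)=\pi\,g_{i,t}(v/\pi)$. Let $\{v^{*}_{i,\tau}\}$ denote an offline optimum of \probname so that $OPT_t=\sum_i\sum_{\tau\le t}g_{i,\tau}(v^{*}_{i,\tau})$, and recall $\tilde{OPT}_{i,t}=G_{i,t}(C_i,\hat{a}_{i,t})$ by construction.

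First I would derive a per-slot inequality by testing optimality against a scaled copy of the offline solution. Since $\{v^{*}_{i,t}\}$ satisfies $\sum_i v^{*}_{i,t}\le A_t$ and $v^{*}_{i,t}\le \delta_{i,t}$, the tuple $\{\pi v^{*}_{i,t}\}$ is feasible for the augmented constraints (\ref{eq:AAt-allowance-C})--(\ref{eq:AAt-rate-C}). Optimality of $\hat{a}_{i,t}$ for \textsf{\allallt-A($\pi$)} together with the identity $\tilde{g}_{i,t}(\pi u)=\pi\,g_{i,t}(u)$ gives
\begin{equation*}
\sum_{i}\!\left[\tilde{g}_{i,t}(\hat{a}_{i,t})-\!\int_0^{\hat{a}_{i,t}}\!\Psi_{i,t}(a)\,da\right]\;\ge\;\sum_{i}\!\left[\pi g_{i,t}(v^{*}_{i,t})-\!\int_0^{\pi v^{*}_{i,t}}\!\Psi_{i,t}(a)\,da\right].
\end{equation*}
Summing over $\tau\in[t]$ places $\pi\cdot OPT_t$ on the right-hand side.

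Next I would convert the algorithmic left-hand side into (a controlled multiple of) $\sum_i \tilde{OPT}_{i,t}$ via a potential argument. Integration by parts in~\eqref{eq:defn_Phi-C} using the exponential identity $f_i'(x)=f_i(x)/(\pi C_i)$ and $G_{i,\tau}(0,a)=0$ rewrites $\Psi_{i,\tau}(a)=\int_0^{C_i}\partial_x G_{i,\tau}(x,a)\,f_i(x)\,dx$, the $f_i$-weighted marginal value of an extra unit of capacity. I would then introduce the potential $\Phi_\tau=\sum_i\int_0^{C_i} f_i(x)\,G_{i,\tau}(x,0)\,dx$ and track $\Phi_\tau-\Phi_{\tau-1}$. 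The normalization $1/(e^{1/\pi}-1)$ of $f_i$ together with the $\pi$-scaling inside $\tilde{g}_{i,\tau}$ is chosen precisely so that the one-slot potential change cancels the algorithmic threshold integral $\sum_i\int_0^{\hat{a}_{i,\tau}}\Psi_{i,\tau}(a)\,da$, leaving a telescoping sum whose terminal value is calibrated against $\sum_i \tilde{OPT}_{i,t}$.

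Finally I would control the adversarial threshold integrals $\int_0^{\pi v^{*}_{i,\tau}}\Psi_{i,\tau}(a)\,da$ on the right using monotonicity of $G_{i,\tau}(x,\cdot)$ and the same integration-by-parts identity, showing each is dominated by a multiple of $\pi\,g_{i,\tau}(v^{*}_{i,\tau})$ plus a term absorbable by $\Phi$. Collecting everything yields a linear inequality of the form $c_1\sum_i\tilde{OPT}_{i,t}\ge \pi\cdot OPT_t-c_2\sum_i\tilde{OPT}_{i,t}$ whose resolution gives $\sum_i\tilde{OPT}_{i,t}\ge \alpha(\pi)\cdot OPT_t$ with $\alpha(\pi)=\pi(1-e^{-1/\pi})$; the limits $\alpha(1)=(e-1)/e$ (recovering the classical free-disposal bound of~\cite{Feldman2009Online}) and $\alpha(\pi)\to 1$ as $\pi\to\infty$ then drop out of the closed form and serve as consistency checks. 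The main obstacle is the telescoping step: pinning down the exact balance between the per-slot change of $\Phi$ and the algorithmic threshold integral, and in particular verifying that the boundary term at $x=C_i$ (with coefficient $f_i(C_i)=\tfrac{1}{\pi C_i}\cdot\tfrac{e^{1/\pi}}{e^{1/\pi}-1}$) is tight against the algorithmic gain---this is exactly where the specific exponential normalization in~\eqref{eq:defn_f-C} is essential.
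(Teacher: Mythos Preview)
Your plan diverges from the paper's argument and contains a genuine gap in the step you flag as ``the main obstacle.'' The paper does \emph{not} argue by testing the optimality of \textsf{\allallt-A($\pi$)} against $\pi v^{*}_{i,t}$ and then telescoping a potential. Instead it runs an online primal--dual argument: it constructs feasible dual variables $\alpha_i=\Psi_{i,t}(\hat a_{i,t})$ and $\beta_t=\tilde\beta_t$ (the KKT multiplier of~\eqref{eq:AAt-allowance-C}), and shows the per-slot dual increment $\Delta D$ satisfies $\Delta D\le \frac{1}{\alpha(\pi)}\,\Delta P$ with $\Delta P=\sum_i\bigl(G_{i,t}(C_i,\hat a_{i,t})-G_{i,t}(C_i,0)\bigr)$. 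The crucial device is the KKT identity for the problem defining $G_{i,t}(x,a)$, namely
\[
\tilde g_{i,t}'(a)\;\le\;\partial_x G_{i,t}(x,a)+\partial_a G_{i,t}(x,a),
\]
which after integrating in $a$ and weighting by $f_i(x)$ yields
\[
\tilde g_{i,t}(\hat a_{i,t})-\int_0^{\hat a_{i,t}}\Psi_{i,t}(a)\,da\;\le\;\int_0^{C_i} f_i(x)\bigl[G_{i,t}(x,\hat a_{i,t})-G_{i,t}(x,0)\bigr]\,dx.
\]
Combined with $C_i f_i(C_i)=\frac{1}{\pi}\frac{e^{1/\pi}}{e^{1/\pi}-1}$ and $\int_0^{C_i}f_i=1$, this closes the bound without ever touching the offline $v^*$ directly; $OPT_t$ enters only through weak duality.

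Your route breaks precisely at the ``adversarial threshold integral'' step. The quantity $\int_0^{\pi v^{*}_{i,\tau}}\Psi_{i,\tau}(a)\,da$ is governed by $\Psi_{i,\tau}$, which depends on the \emph{algorithm's} history $\{\hat a_{i,s}\}_{s<\tau}$ and bears no a priori relation to $g_{i,\tau}(v^*_{i,\tau})$. Bounding it crudely via $\Psi_{i,\tau}(a)\le f_i(C_i)\,G_{i,\tau}(C_i,a)\le f_i(C_i)\bigl(\tilde{OPT}_{i,t}+\pi g_{i,\tau}(v^*_{i,\tau})\bigr)$ and summing over $\tau$ produces a term of order $f_i(C_i)\,\pi\,C_i\cdot\tilde{OPT}_{i,t}=\frac{e^{1/\pi}}{e^{1/\pi}-1}\tilde{OPT}_{i,t}$ plus an additional $\frac{\pi e^{1/\pi}}{e^{1/\pi}-1}$ multiple of $OPT_t$; plugging this back makes the right-hand side of your master inequality \emph{negative}, so nothing is learned. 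The difficulty is structural: in a direct-comparison proof the ``threshold'' charged to $v^*_{i,\tau}$ is the time-$\tau$ threshold, whereas the primal--dual proof charges $v^*_{i,\tau}$ against the \emph{final} $\alpha_i=\Psi_{i,t}(\hat a_{i,t})$, and the monotone decrease of $h_{i,\tau}(\cdot)$ lets past slots be ignored in the increment. Your potential $\Phi$ cannot play this role because the claimed ``cancellation'' is only the one-sided inequality above (pseudo-revenue $\le$ potential change), which bounds your left-hand side from \emph{above}, not from below; hence it cannot absorb the adversarial term on the right. To repair the argument you would need to pass to the dual of \probname\ and set $\alpha_i$ to the terminal threshold, at which point you recover the paper's proof.
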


The proof of Theorem~\ref{thm:step-I} is provided in Appendix~\ref{app:thm:step-I}.  Our proof follows the online primal-and-dual analysis in~\cite{Feldman2009Online,buchbinder2009design}. We use the dual problem of \probname~as a baseline for comparison. By carefully design or update the dual variable at each slot, we show that our increment in the total optimal objective of all subproblems \textsf{\probname$_i$} is at least a fraction of $\alpha(\pi)$ of the increment on the objective of dual \probname~at  each slot $t$. This directly leads to Theorem~\ref{thm:step-I}.

\textbf{Remark.} The results we show in Theorem~\ref{thm:step-I} is with broader application scenarios and of independent interest. When all the revenue function is linear with a constant slope, i.e., all inventories have the uniform unit price, the \textsf{Step-I} problem reduces to maximum the total amount of allocation, which is studied in~\cite{azar2006maximizing,KALYANASUNDARAM20003bmatching}. Our result (Theorem~\ref{thm:step-I}) implies that, when there is no allowance augmentation (i.e., $\pi=1$), it reproduces the CR $\frac{e}{e-1}$ as shown in~\cite{azar2006maximizing,KALYANASUNDARAM20003bmatching}. Also, when considering linear revenue functions, our problem  can be viewed as a continuous counterpart of the online ad allocation problem with free disposal studied in ~\cite{Feldman2009Online}. We recover the $\frac{e}{e-1}$ CR when there is no allowance augmentation (i.e., $\pi=1$). In both case, our results generalize to the case with allowance augmentation and show an improved CR of $\frac{1}{\pi}\frac{e^\frac{1}{\pi}}{e^\frac{1}{\pi}-1}$ with $\pi$-time augmentation, which tends to one when $\pi\to\infty$. 

We also note that Theorem~\ref{thm:step-I} holds for arbitrary increasing and differential concave functions starting from the origin, not restricted to the revenue functions we consider in set $\mathcal{G}$. This would be useful for generalizing our approaches to a broader application area with different sets of revenue functions beyond $\mathcal{G}$, which we will discuss in Sec.~\ref{sec:discussion-general-application}.



\subsubsection{Competitive analysis of \textsf{A\&P$_l$($\pi$)}}
We first summarize \textsf{A\&P$_l$($\pi$)}. At each slot $t$, (\textsf{Step-I}) it solves  \textsf{\allallt-A($\pi$)} to obtain the allowance allocation $\hat{a}_{i,t}, \forall i\in[N]$, and (\textsf{Step-{II}}) it determines  $\hat{v}_{i,t}$ according to~\eqref{eq:subproblem-crpursuit}, for all $i\in[N]$.

We then show its performance guarantee off \probname~in the following theorem. 

\begin{thm}
\label{thm:AP-general-N}
The \textsf{A\&P$_l$($\pi_{1}$)} algorithm is $\frac{e^\frac{1}{\pi_1}}{e^\frac{1}{\pi_1}-1}$-competitive for \probname. 
\end{thm}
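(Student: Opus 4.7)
The plan is to combine the two per-step guarantees already established for \textsf{A\&P$_l$($\pi_1$)} with the feasibility bridge of Lemma~\ref{thm:allowance-allocation-constraint}; the heavy lifting has been done upstream, so the argument is short and modular. First I would verify feasibility. Since \textsf{AAt-A($\pi_1$)} enforces $\sum_i \hat{a}_{i,t} \leq \pi_1 A_t$ and $\hat{a}_{i,t} \leq \pi_1\delta_{i,t}$ via constraints~\eqref{eq:AAt-allowance-C}--\eqref{eq:AAt-rate-C}, and Lemma~\ref{thm:step-II-online-bound} gives $\hat{v}_{i,t} \leq \hat{a}_{i,t}/\pi_1$, Lemma~\ref{thm:allowance-allocation-constraint} immediately implies that the allowance constraint~\eqref{eq:allowance} and the rate limit~\eqref{eq:rate} both hold; the capacity constraint~\eqref{eq:inventory} follows from Lemma~\ref{thm:step-II-class-I}, which asserts that Step-II always returns a feasible solution to each \textsf{\probname$_i$}.

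Next I would chain the revenue bounds. Writing $ALG(\sigma)=\sum_i \eta_{i,T}$, Lemma~\ref{thm:step-II-class-I} yields the per-inventory guarantee $\eta_{i,T}\geq \tilde{OPT}_{i,T}/\pi_1$, and Theorem~\ref{thm:step-I} applied with $\pi=\pi_1$ gives $\sum_i \tilde{OPT}_{i,T}\geq \alpha(\pi_1)\cdot OPT_T$, where $\alpha(\pi_1)=\pi_1(e^{1/\pi_1}-1)/e^{1/\pi_1}$. Composing the two inequalities,
\[
ALG(\sigma) \;\geq\; \frac{1}{\pi_1}\sum_{i\in[N]}\tilde{OPT}_{i,T} \;\geq\; \frac{\alpha(\pi_1)}{\pi_1}\,OPT(\sigma) \;=\; \frac{e^{1/\pi_1}-1}{e^{1/\pi_1}}\,OPT(\sigma),
\]
so inverting the ratio delivers $OPT(\sigma)/ALG(\sigma)\leq e^{1/\pi_1}/(e^{1/\pi_1}-1)$, as claimed.

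I do not anticipate any serious obstacle: the divide-and-conquer design was engineered precisely so that the two multiplicative losses (the $1/\pi_1$ from Step-II and the $\alpha(\pi_1)$ from Step-I) telescope into a single clean expression, and the choice $\pi_1=\ln\theta+1$ is exactly what allows Step-II to match the optimal single-inventory ratio recorded in Theorem~\ref{thm:single-competitive-class-I}. The one mildly delicate point is that both steps must share the \emph{same} augmentation parameter $\pi_1$ in order for Lemma~\ref{thm:allowance-allocation-constraint} to certify feasibility and for Lemma~\ref{thm:step-II-class-I} to deliver the $1/\pi_1$ online bound; this consistency is baked into the definition of \textsf{A\&P$_l$($\pi_1$)} and therefore requires no additional argument.
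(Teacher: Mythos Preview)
Your proposal is correct and follows essentially the same route as the paper: first certify feasibility by combining the \textsf{\allallt-A($\pi_1$)} constraints with Lemma~\ref{thm:allowance-allocation-constraint} (for~\eqref{eq:allowance} and~\eqref{eq:rate}) and Lemma~\ref{thm:step-II-class-I} (for~\eqref{eq:inventory}), then compose the Step-II bound $\eta_{i,t}\geq \tilde{OPT}_{i,t}/\pi_1$ with the Step-I bound $\sum_i \tilde{OPT}_{i,t}\geq \alpha(\pi_1)\,OPT_t$ to obtain the ratio $e^{1/\pi_1}/(e^{1/\pi_1}-1)$. The paper's proof is essentially identical, and your explicit invocation of Lemma~\ref{thm:step-II-online-bound} simply unpacks what Lemma~\ref{thm:allowance-allocation-constraint} already uses internally.
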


\begin{proof}

We fist show the feasibility of \textsf{A\&P$_l$($\pi_{1}$)}. By solving \textsf{\allallt-A($\pi_1$)}, $\{\hat{a}_{i,t}\}_{i\in[N],t\in[T]}$ satisfies condition~\eqref{eq:condition-for-feasibility} with $\pi=\pi_1$ in Lemma~\ref{thm:allowance-allocation-constraint}. According to  Lemma~\ref{thm:allowance-allocation-constraint}, the online solution satisfies the allowance constraint and rate limit constraint of \probname. Beside, according to Lemma~\ref{thm:step-II-class-I}, the online solution  is always feasible to \textsf{\probname$_i$}, i.e., it satisfies the capacity constraint of \probname. We conclude  the online solution of\textsf{A\&P$_l$($\pi_{1}$)} is feasible.

As for the CR, combining Theorem~\ref{thm:step-I} we obtain in \textsf{Step-I} of \textsf{A\&P$_l$($\pi_{1}$)} and Lemma~\ref{thm:step-II-class-I} in \textsf{Step-II}, we have that the online objective  of  \textsf{\algname$_l$($\pi_{1}$)},
\begin{equation}
     \sum_{i\in[N]}\eta_{i,t}\geq  \frac{1}{\pi_{1}}\cdot \sum_{i\in[N]} \tilde{OPT}_{i,t} \geq  \frac{1}{\pi_{1}}\alpha(\pi_{1}) \cdot OPT_t=\frac{e^\frac{1}{\pi_1}-1}{e^\frac{1}{\pi_1}} \cdot OPT_t, \forall t.
 \end{equation}
Thus, at the final slot $T$, we also have $\sum_{i\in[N]}\eta_{i,T}\geq \frac{e^\frac{1}{\pi_1}}{e^\frac{1}{\pi_1}-1} \cdot OPT_T$, and we conclude that \textsf{A\&P$_l$($\pi_{1}$)}  is $\frac{e^\frac{1}{\pi_1}}{e^\frac{1}{\pi_1}-1}$-competitive.

\end{proof}

We note that $\frac{e^\frac{1}{\pi}}{e^\frac{1}{\pi}-1}\leq \pi+1$. Thus, comparing with the result under the single-inventory case shown in Theorem~\ref{thm:single-competitive-class-I}, 
 Theorem~\ref{thm:AP-general-N} implies that we can achieve a CR with at most an additive constant (one) for the case with arbitrary number of inventories. Also, $\frac{e^\frac{1}{\pi}}{e^\frac{1}{\pi}-1}\sim\pi$, i.e., $\lim_{\pi\to\infty}\frac{e^\frac{1}{\pi}}{e^\frac{1}{\pi}-1}/\pi=1$. It shows that the CR we achieve for \probname~under arbitrary number of inventories is asymptotically equivalent to the one under the single-inventory case.

\subsection{A Simple Algorithm for Small $N$}\label{sec:small-N}
From the design of our divide-and-conquer approach, we note that our online algorithm is able to allocate $\pi$-times more allowance to the subproblems according to Lemma~\ref{thm:allowance-allocation-constraint}. It reveals that when the number of the inventory is small (e.g., less than $\pi$), the allowance constraint could becomes redundancy in our design. Leveraging the above insight, we show a simple and optimal online algorithm for \probname~when $N$ is relative small compared with $\theta$. More specifically, 
 we consider the case that $N\leq\pi_{1}$. 
 We denote our online algorithm as \textsf{A\&P$_s$($\pi$)} with $\pi$ as a parameter to be specified.  \textsf{A\&P$_s$($\pi$)} consists of two steps, where the first step is to allocate the allowance, and the second step is to pursuit a $\pi$ performance ratio for each subproblem. In the first step,  \textsf{A\&P$_s$($\pi$)} determines the allowance allocation as
\begin{equation}
\label{eq:small-N-allowance}
    \hat{a}_{i,t} = \delta_{i,t}.
\end{equation}
In the second step, for each \textsf{${\probname_i}$}, it chooses $\tilde{g}_{i,t}(v_{i,t})$ as $g_{i,t}(v_{i,t})$. We note that in such case \textsf{${\probname_i}$} reduces to the single inventory problem we discuss in Sec.~\ref{sec:single}. The \textsf{A\&P$_s$($\pi$)} determines the online solution running \crp. That is, it choose $\hat{v}_{i,t}$ such that

\begin{equation}
\label{eq:small-N-allocation}
    g_{i,t}(\hat{v}_{i,t}) =\frac{1}{\pi}(OPT_{i,t}-OPT_{i,t})
\end{equation}
where $OPT_{i,t}$ is the optimal objective of \textsf{${\probname_i}$} given $\{\hat{a}_{i,\tau}\}_{\tau\in[t]}$ and $\{\tilde{g}_{i,t}(\cdot)\}_{\tau\in[t]}$ at slot $t$.

\begin{thm}
\label{thm:AP-small-N}
The \textsf{A\&P$_s$($\pi_{1}$)} is $\pi_{1}$-competitive  when $N\leq\pi_{1}$. 
\end{thm}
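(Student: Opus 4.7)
The plan is to establish both feasibility of the online solution of \textsf{A\&P$_s$($\pi_1$)} and its $\pi_1$-competitive guarantee by leveraging Lemma~\ref{thm:single-rate-limit-feasibility}, Lemma~\ref{thm:single-feasible-competitive}, and Lemma~\ref{thm:upper-bound-class-I} on each decomposed subproblem \textsf{\probname$_i$}, and by exploiting the condition $N \leq \pi_1$ to fold the allowance constraint into the per-slot rate-limit bounds.

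First I would verify the three feasibility constraints of \probname. Since $\tilde{g}_{i,t}=g_{i,t}$ and $\hat{a}_{i,t}=\delta_{i,t}$, Step-II runs \textsf{CR-Pursuit($\pi_1$)} on each \textsf{\probname$_i$} with rate limit $\delta_{i,t}$; Lemma~\ref{thm:single-rate-limit-feasibility} immediately yields $\hat{v}_{i,t}\leq \tfrac{1}{\pi_1}\delta_{i,t}\leq \delta_{i,t}$, giving constraint~\eqref{eq:rate}. Lemma~\ref{thm:upper-bound-class-I} gives $\Phi(\pi_1)\leq \tfrac{\ln\theta+1}{\pi_1}C_i = C_i$, so by Lemma~\ref{thm:single-feasible-competitive} the capacity constraint~\eqref{eq:inventory} holds. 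For the allowance constraint~\eqref{eq:allowance}, I would sum the bound from Lemma~\ref{thm:single-rate-limit-feasibility} over $i$ and combine the standing assumption $\delta_{i,t}\leq A_t$ with the hypothesis $N\leq \pi_1$:
\begin{equation*}
\sum_{i\in[N]}\hat{v}_{i,t} \;\leq\; \frac{1}{\pi_1}\sum_{i\in[N]}\delta_{i,t} \;\leq\; \frac{N}{\pi_1}\,A_t \;\leq\; A_t .
\end{equation*}

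For the competitive ratio, I would observe that with $\hat{a}_{i,t}=\delta_{i,t}$ and $\tilde{g}_{i,t}=g_{i,t}$, each \textsf{\probname$_i$} is precisely what one obtains by dropping the coupling allowance constraint~\eqref{eq:allowance} from \probname~and restricting attention to inventory $i$. Hence $\sum_i \tilde{OPT}_{i,t}$ is a relaxation upper bound on $OPT_t$, so the Step-I approximation factor equals $1$ (no loss). Applying Lemma~\ref{thm:single-feasible-competitive} to \textsf{CR-Pursuit($\pi_1$)} on each subproblem yields $\eta_{i,t}\geq \tfrac{1}{\pi_1}\tilde{OPT}_{i,t}$; summing over $i$ and evaluating at $t=T$ gives
\begin{equation*}
\sum_{i\in[N]}\eta_{i,T} \;\geq\; \frac{1}{\pi_1}\sum_{i\in[N]}\tilde{OPT}_{i,T} \;\geq\; \frac{1}{\pi_1}\,OPT_T ,
\end{equation*}
which is $\pi_1$-competitiveness.

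The main obstacle is conceptual rather than technical: recognizing that the condition $N\leq \pi_1$ is precisely what is needed so that the coupling allowance constraint becomes automatically implied by the per-inventory rate limits under the $\tfrac{1}{\pi_1}$-shrinkage of Lemma~\ref{thm:single-rate-limit-feasibility}. Once this is seen, Step-I collapses to the trivial allocation $\hat{a}_{i,t}=\delta_{i,t}$ without any online weighting, and the overall CR inherits directly from the optimal single-inventory analysis of Sec.~\ref{sec:single}, avoiding both the pseudo-revenue optimization \textsf{\allallt-A($\pi$)} and the augmented revenue $\tilde{g}_{i,t}=\pi g_{i,t}(\cdot/\pi)$ used in \textsf{A\&P$_l$($\pi$)}.
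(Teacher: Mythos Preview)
Your proposal is correct and follows essentially the same argument as the paper: feasibility via Lemma~\ref{thm:single-rate-limit-feasibility} for the rate and allowance constraints (using $N\leq\pi_1$ and $\delta_{i,t}\leq A_t$) and via Lemma~\ref{thm:upper-bound-class-I}/Lemma~\ref{thm:single-feasible-competitive} for capacity, together with the relaxation bound $\sum_i OPT_{i,t}\geq OPT_t$ for the competitive ratio. The only cosmetic difference is that the paper cites Theorem~\ref{thm:single-competitive-class-I} directly for feasibility of \textsf{CR-Pursuit($\pi_1$)} rather than unpacking it into Lemmas~\ref{thm:upper-bound-class-I} and~\ref{thm:single-feasible-competitive} as you do.
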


\begin{proof}
We first check the feasibility of \textsf{A\&P($\pi_{1}$)}. The rate limit constraints and inventory constraints are directly guaranteed by the second step of \textsf{A\&P($\pi_{1}$)} where we run the \textsf{CR-Pursuit($\pi_{1}$)} (as shown in Theorem~\ref{thm:single-competitive-class-I}).  We then check the allowance constraints, for any $t$, we have
\begin{equation}
   \sum_{i} \hat{v}_{i,t} \leq  \sum_{i} \frac{1}{\pi_{1}}\cdot \hat{a}_{i,t} = \sum_{i} \frac{1}{\pi_{1}}\cdot \delta_{i,t}\leq \frac{1}{\pi_{1}} N\cdot A_t\leq A_t.
\end{equation}
Recall that we have $\hat{v}_{i,t} \leq \frac{1}{\pi_{1}}\cdot \hat{a}_{i,t}$ according to Lemma~\ref{thm:single-rate-limit-feasibility}, and without loss of generality, we consider $\delta_{i,t}\leq A_t$, as discussed in Sec.~\ref{sec:problem}.

We then show the performance analysis of the algorithm. It is clear that at each slot $t$, we have \begin{equation}
   \sum_i{OPT_{i,t}} \geq OPT_t,\forall t.
\end{equation}
where $OPT_t$ is the  optimal objective of \probname~at slot $t$. This is because $\sum_i{OPT_{i,t}}$ equals the optimal objective of of \probname~at slot $t$ without the allowance constraint. Then, we have the online objective
\begin{equation}
    \sum_{i,t} g_{i,t}(\hat{v}_{i,t}) = \frac{1}{\pi_{1}}\sum_i{OPT_{i,T}}\geq \frac{1}{\pi_{1}} OPT_T
\end{equation}
Thus, \textsf{A\&P($\pi_{1}$)} is $\pi_{1}$-competitive. 
\end{proof}

Theorem~\ref{thm:AP-small-N} shows that when the total number of inventories is relatively small compared with the uncertainty range of the revenue functions (i.e., $\theta$), we can reduce the multiple inventory problem to the single inventory case with the same performance guarantee.

\subsection{Summary of Our Proposed Online Algorithm}\label{sec:alg-cr}

\begin{algorithm}[!ht]
\caption{ \algname($\pi$) Algorithm \label{alg:online-algorithm}}
\begin{algorithmic}[1]
\STATE At slot $t$, $\{g_{i,t}(\cdot)\}_{i\in[N]}$, $A_t$, and $\{\delta_{i,t}\}_{i\in[N]}$ are revealed,
\IF{$N\leq\pi$}
\STATE Run \textsf{A\&P$_s$($\pi$)}, i.e., determine $\hat{a}_{i,t}=\delta_{i,t}$ as in~\eqref{eq:small-N-allowance}\\ and determine  $\hat{v}_{i,t}$ according to~\eqref{eq:small-N-allocation}, for all $i\in[N]$,
\RETURN $\{\hat{v}_{i,t}\}_{i\in[N]}$.

\ELSE 
\STATE Run \textsf{A\&P$_l$($\pi$)}:
\STATE \textsf{Step-I}: solve  \textsf{\allallt-A($\pi$)} to obtain the allowance allocation $\hat{a}_{i,t}, \forall i\in[N]$,
\STATE \textsf{Step-II}: determine  $\hat{v}_{i,t}$ according to~\eqref{eq:subproblem-crpursuit}, for all $i\in[N]$,
\RETURN $\{\hat{v}_{i,t}\}_{i\in[N]}$.
\ENDIF

\end{algorithmic}
\end{algorithm}

In this section, we summarize the our online algorithm, denoted as \algname($\pi$), and provide its performance analysis. An illustration of our approach and results is provided in Fig.~\ref{fig:flowchart}. We also compare our results with existing ones in Fig~\ref{fig:cr}. 

The pseudocode of \algname($\pi$)~ is provided in Algorithm~\ref{alg:online-algorithm}. Depending on the value of $N$ and $\theta$, we run either \textsf{A\&P$_s$($\pi$)} or \textsf{A\&P$_l$($\pi$)}. 
The CR of our online algorithm is shown in the following theorem. Recalled that $\pi_{1} = \ln\theta+1$. 

\begin{thm}
\label{thm:competitive-ratio-class-I}
Our online algorithm \algname($\pi_{1}$) achieves the following CR, 
\begin{equation}
\label{eq:compeptitve-ratio-class-I}
\mathcal{CR}_1(A\&P(\pi_{1})) = 
\begin{cases}
\pi_{1}, & \pi_{1} \geq N,\\
\frac{e^\frac{1}{\pi_1}}{e^\frac{1}{\pi_1}-1}, & \pi_{1}< N.
\end{cases}
\end{equation}

\end{thm}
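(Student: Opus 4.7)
The plan is to prove Theorem~\ref{thm:competitive-ratio-class-I} by straightforward case analysis that mirrors the dispatch logic of Algorithm~\ref{alg:online-algorithm}, since the heavy lifting has already been done in Theorems~\ref{thm:AP-small-N} and~\ref{thm:AP-general-N}. First, I would fix an arbitrary input $\sigma \in \Sigma$ and recall that \algname($\pi_1$) selects one of two subroutines depending on the comparison between $N$ and $\pi_1 = \ln\theta + 1$. Since feasibility and CR are guaranteed slot-by-slot by each subroutine individually, it suffices to verify the theorem case-by-case.

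In the first case, when $\pi_1 \geq N$ (equivalently $N \leq \pi_1$), Algorithm~\ref{alg:online-algorithm} invokes \textsf{A\&P$_s$($\pi_1$)} with allowance allocation $\hat{a}_{i,t} = \delta_{i,t}$ and the subproblem solved via \crp. I would directly invoke Theorem~\ref{thm:AP-small-N}, which already establishes both feasibility (through the inequality $\sum_i \hat{v}_{i,t} \leq (N/\pi_1) A_t \leq A_t$) and $\pi_1$-competitiveness (through summing the pursuit equality~\eqref{eq:small-N-allocation} and using $\sum_i OPT_{i,T} \geq OPT_T$). This yields $\mathcal{CR}(\algname(\pi_1)) = \pi_1$ in this branch, matching the first line of~\eqref{eq:compeptitve-ratio-class-I}.

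In the second case, when $\pi_1 < N$, Algorithm~\ref{alg:online-algorithm} invokes \textsf{A\&P$_l$($\pi_1$)}, which carries out the two-step divide-and-conquer procedure: solving \textsf{\allallt-A($\pi_1$)} for allowance allocation in \textsf{Step-I}, then pursuing $\tilde{OPT}_{i,t}/\pi_1$ via~\eqref{eq:subproblem-crpursuit} in \textsf{Step-II}. Here I would directly invoke Theorem~\ref{thm:AP-general-N}, whose proof chains Lemma~\ref{thm:allowance-allocation-constraint} (feasibility of allowance and rate constraints given condition~\eqref{eq:condition-for-feasibility}), Lemma~\ref{thm:step-II-class-I} (feasibility for each \textsf{\probname$_i$} and the $1/\pi_1$ performance ratio in \textsf{Step-II}), and Theorem~\ref{thm:step-I} (approximation guarantee $\alpha(\pi_1) = \pi_1(e^{1/\pi_1}-1)/e^{1/\pi_1}$ in \textsf{Step-I}) to conclude $\sum_i \eta_{i,T} \geq \frac{e^{1/\pi_1}-1}{e^{1/\pi_1}} OPT_T$, giving the second line of~\eqref{eq:compeptitve-ratio-class-I}.

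Since there is no interaction between the two cases and each branch of Algorithm~\ref{alg:online-algorithm} inherits its CR from an already-proved theorem, there is no genuine obstacle here; the ``proof'' is essentially a bookkeeping step. The only subtlety worth spelling out explicitly is that the switching threshold $N \leq \pi_1$ versus $N > \pi_1$ in Algorithm~\ref{alg:online-algorithm} aligns exactly with the hypothesis $N \leq \pi_1$ required by Theorem~\ref{thm:AP-small-N}, so no edge-case handling is needed at the boundary. I would close by remarking that the piecewise expression in~\eqref{eq:compeptitve-ratio-class-I} is continuous at $N = \pi_1$ up to the constant gap $\pi_1 \leq \frac{e^{1/\pi_1}}{e^{1/\pi_1}-1} \leq \pi_1 + 1$ noted right after Theorem~\ref{thm:AP-general-N}, confirming the overall near-optimality stated in the introduction.
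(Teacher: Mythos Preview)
Your proposal is correct and essentially identical to the paper's own treatment: the paper states that Theorem~\ref{thm:competitive-ratio-class-I} ``simply combines the results we show in Theorem~\ref{thm:AP-small-N} and Theorem~\ref{thm:AP-general-N},'' which is exactly the case split you perform. Your version is in fact more explicit about the dispatch logic and the boundary, but the underlying argument is the same bookkeeping step.
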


Theorem~\ref{thm:competitive-ratio-class-I} simply combines the results we show in Theorem~\ref{thm:AP-small-N} and Theorem~\ref{thm:AP-general-N}. The CR we obtain is tight and optimal when $N$ is smaller than $\ln\theta+1$. This also recovers the results for single-inventory case discussed in Sec.~\ref{sec:single} and~\cite{Sun2020Competitive}. It is within an additive constant of one to the lower bound when $N$ is larger than $\ln\theta+1$. When $\theta=1$, our problem reduces to maximizing total allocation, and our result recovers the optimal CR $\frac{e}{e-1}$ achieved in~\cite{azar2006maximizing,KALYANASUNDARAM20003bmatching}. In~\cite{ma2020algorithms}, the authors show a CR that is within $[\pi_1,\frac{e^\frac{1}{\pi_1}}{e^\frac{1}{\pi_1}-1}]$, independent of $N$, and consistently lower than the one achieved in~\cite{Sun2020Competitive}. They also show that the CR they achieve is tight when $N$ tends to infinity.  While our achieved CR at large $N$ is worse compared with~\cite{ma2020algorithms}, the gap between is no greater than an additive constant of one. In additional, we achieve a better (and optimal) CR when $N$ is small. We provide an illustration on the CRs achieved by~\cite{ma2020algorithms,Sun2020Competitive} and ours in Fig.~\ref{fig:cr}. 

\begin{figure}[!h]
    \centering
    \includegraphics[width=0.75\textwidth]{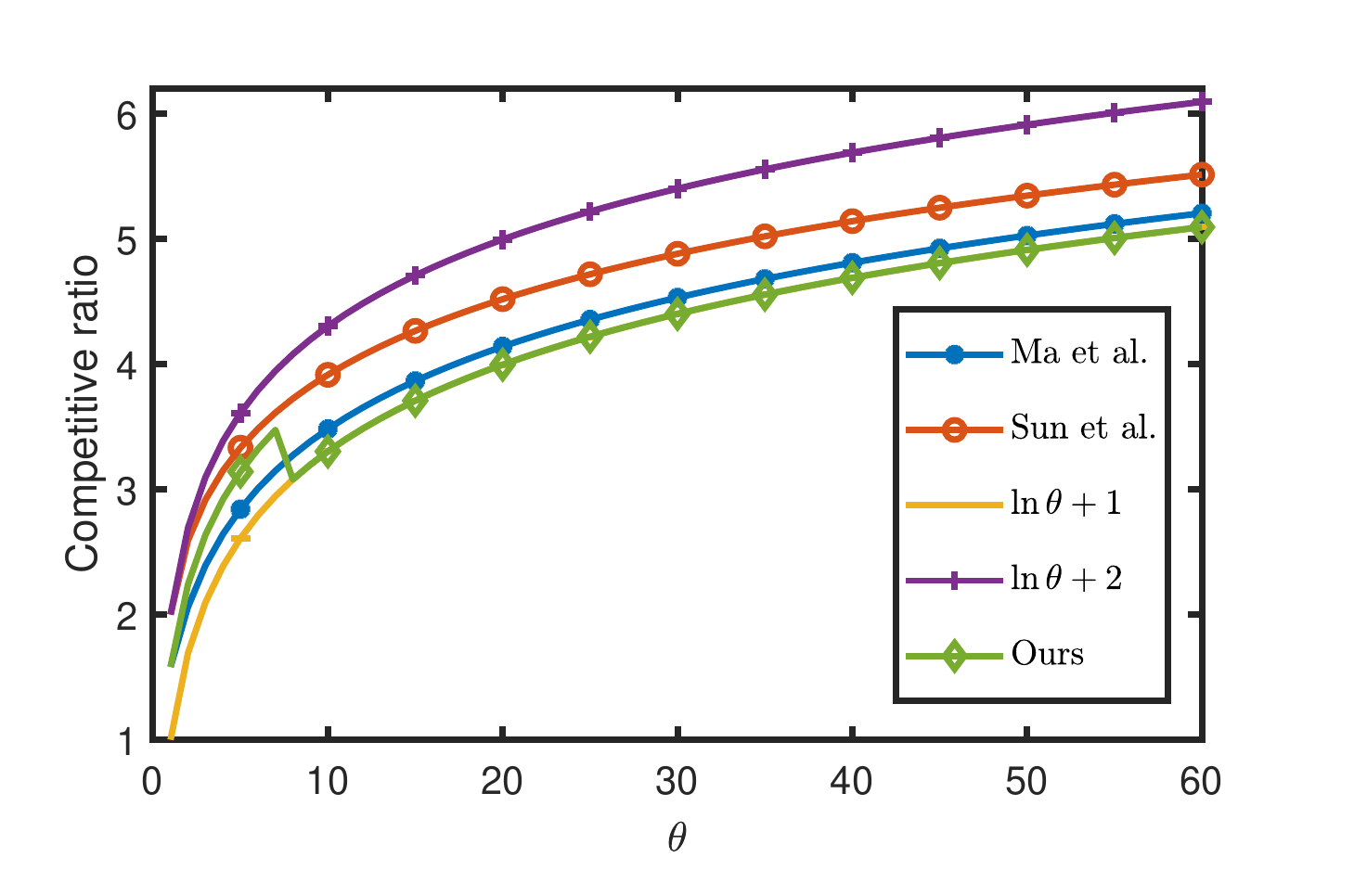}
    \caption{The competitive ratios obtained by Ma et al.~\cite{ma2020algorithms}, Sun et al.~\cite{Sun2020Competitive} and ours. We fix $N=3$ and vary $\theta$ from $1$ to $60$. 
    }
    \label{fig:cr}
\end{figure}





\section{Extension to General Concave Revenue Function}\label{sec:discussion-general-application}


In addition to the set of revenue functions $\mathcal{G}$ we discuss above, our divide-and-conquer approach can be applied under a broader range of functions with corresponding applications. For example, we widely observe the logarithmic functions (e.g., $\log(v+1)$) in wireless communication~\cite{Niv2012Dynamic,Niv2010Multi}, which is not covered by the revenue function set $\mathcal{G}$ when considering sufficiently large capacity. Also, the revenue functions in the application of one-way trading with price elasticity~\cite{cr_pursuit2018}.  In general, let us consider a given set of concave revenue functions with zero value at the origin; say $\tilde{\mathcal{G}}$. We define $\Phi_{\tilde{G}}(1)$ as the maximum online total allocation of running \textsf{CR-Pursuit(1)} under revenue functions $\tilde{\mathcal{G}}$ in the single inventory case (as defined in~\eqref{eq:maximum_allocation_cr_pursuit} with $\pi=1$). It represents the maximum capacity we required to  maintain the same performance of the offline optimal at all time. We have the following results for the \probname~under the set of revenue functions $\tilde{\mathcal{G}}$.

\begin{prop}
Suppose we can find $\tilde{\pi}$ such that for \probnameS, we have 
\begin{equation}
\label{eq:phi_1_condition}
\Phi_{\tilde{G}}(1)\leq \tilde{\pi} \cdot C.  
\end{equation}
We can run the \textsf{A\&P($\tilde{\pi}$)} for \probname~under $\tilde{\mathcal{G}}$. The competitive ratio of \textsf{A\&P($\tilde{\pi}$)} is given by,
\begin{equation}
\label{eq:compeptitve-ratio-class-general}
\mathcal{CR}_{\tilde{G}}(\textsf{A\&P(${\tilde{\pi}}$)} = 
\begin{cases}
\tilde{\pi}, & \tilde{\pi} \geq N,\\
\frac{e^\frac{1}{\tilde{\pi}}}{e^\frac{1}{\tilde{\pi}_1}-1}, & \tilde{\pi}< N.
\end{cases}
\end{equation}
\end{prop}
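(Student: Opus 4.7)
The plan is to mirror the proof of Theorem~\ref{thm:competitive-ratio-class-I} almost verbatim, replacing the specific value $\pi_{1}=\ln\theta+1$ and the corresponding bound from Lemma~\ref{thm:upper-bound-class-I} with the general parameter $\tilde{\pi}$ furnished by the hypothesis $\Phi_{\tilde{G}}(1)\le \tilde{\pi}\cdot C$. The argument splits into the two regimes $\tilde{\pi}\ge N$ and $\tilde{\pi}<N$, mirroring the two branches of Algorithm~\ref{alg:online-algorithm}.

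For the small-$N$ case ($\tilde{\pi}\ge N$), I would run \textsf{A\&P$_s$($\tilde{\pi}$)}: set $\hat{a}_{i,t}=\delta_{i,t}$ and apply \textsf{CR-Pursuit($\tilde{\pi}$)} on each single-inventory subproblem independently. The key step is to verify that \textsf{CR-Pursuit($\tilde{\pi}$)} is feasible under $\tilde{\mathcal{G}}$, which by Lemma~\ref{thm:single-feasible-competitive} reduces to showing $\Phi_{\tilde{G}}(\tilde{\pi})\le C_i$. This follows from a one-line scaling: since each $g\in\tilde{\mathcal{G}}$ is concave with $g(0)=0$, comparing the pursuit rules gives $g_t(\hat{v}_t^{\tilde{\pi}})=(1/\tilde{\pi})g_t(\hat{v}_t^{1})\le g_t(\hat{v}_t^{1}/\tilde{\pi})$, hence $\hat{v}_t^{\tilde{\pi}}\le \hat{v}_t^{1}/\tilde{\pi}$ and so $\Phi_{\tilde{G}}(\tilde{\pi})\le \Phi_{\tilde{G}}(1)/\tilde{\pi}\le C_i$. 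Once feasibility is in hand, the $\tilde{\pi}$-competitiveness, together with the allowance/rate feasibility check $\sum_i \hat{v}_{i,t}\le \sum_i \delta_{i,t}/\tilde{\pi}\le N A_t/\tilde{\pi}\le A_t$, is inherited directly from the proof of Theorem~\ref{thm:AP-small-N}, since neither argument uses any property of the revenue functions beyond concavity and $g(0)=0$.

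For the large-$N$ case ($\tilde{\pi}<N$), I would run \textsf{A\&P$_l$($\tilde{\pi}$)} with the choice $\tilde{g}_{i,t}(v)=\tilde{\pi}\cdot g_{i,t}(v/\tilde{\pi})$. Step-I amounts to solving \textsf{\allallt-A($\tilde{\pi}$)}, and the approximation guarantee $\sum_i\tilde{OPT}_{i,t}\ge \alpha(\tilde{\pi})\cdot OPT_t$ transfers unchanged, because the remark following Theorem~\ref{thm:step-I} explicitly states that Theorem~\ref{thm:step-I} holds for any class of increasing, differentiable, concave revenue functions that vanish at the origin. For Step-II, I would reuse the change-of-variables $z_{i,\tau}=v_{i,\tau}/\tilde{\pi}$ from Lemma~\ref{thm:step-II-class-I} to recast the pursuit rule as \textsf{CR-Pursuit($1$)} on a rescaled subproblem with capacity $C_i/\tilde{\pi}$ and per-slot cap $\hat{a}_{i,t}/\tilde{\pi}$. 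Applying the hypothesis to this rescaled capacity gives $\sum_t \hat{v}_{i,t}\le \Phi_{\tilde{G}}(1)\le \tilde{\pi}\cdot (C_i/\tilde{\pi})=C_i$, preserving the capacity constraint of \textsf{\probname$_i$}; the allowance and rate constraints then follow from Lemma~\ref{thm:allowance-allocation-constraint}, whose proof again uses only concavity and $g(0)=0$. Combining the two steps yields $\sum_i \eta_{i,T}\ge (1/\tilde{\pi})\sum_i \tilde{OPT}_{i,T}\ge (\alpha(\tilde{\pi})/\tilde{\pi})\cdot OPT_T = \tfrac{e^{1/\tilde{\pi}}-1}{e^{1/\tilde{\pi}}}\cdot OPT_T$, giving CR $e^{1/\tilde{\pi}}/(e^{1/\tilde{\pi}}-1)$, exactly as in Theorem~\ref{thm:AP-general-N}.

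The only real subtlety is ensuring that the hypothesis $\Phi_{\tilde{G}}(1)\le \tilde{\pi}\cdot C$ is read uniformly across capacities $C$, so that it can be specialized to the rescaled capacity $C_i/\tilde{\pi}$ in Step-II. This is the natural reading of the bound (it is a statement about the function class $\tilde{\mathcal{G}}$ rather than a single instance), and is exactly how the analogous bound from Lemma~\ref{thm:upper-bound-class-I} is deployed in Lemma~\ref{thm:step-II-class-I}. Once this homogeneity is made explicit, the remainder of the proof is a line-by-line transliteration of Section~\ref{sec:online}.
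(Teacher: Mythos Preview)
Your proposal is correct and follows essentially the same route as the paper's own (very brief) proof sketch: both argue the small-$N$ case by the concavity bound $\Phi_{\tilde{\mathcal{G}}}(\tilde{\pi})\le \Phi_{\tilde{\mathcal{G}}}(1)/\tilde{\pi}\le C$ to recover Theorem~\ref{thm:AP-small-N}, and the large-$N$ case by invoking Theorem~\ref{thm:step-I} (valid for general concave functions) together with the rescaling in Lemma~\ref{thm:step-II-class-I} where the hypothesis replaces Lemma~\ref{thm:upper-bound-class-I}. Your explicit justification of the inequality $\Phi_{\tilde{\mathcal{G}}}(\tilde{\pi})\le \Phi_{\tilde{\mathcal{G}}}(1)/\tilde{\pi}$ via $g_t(\hat v_t^1/\tilde{\pi})\ge (1/\tilde{\pi})g_t(\hat v_t^1)$ and your remark about reading the hypothesis uniformly in $C$ are both apt and in fact spell out details the paper leaves implicit.
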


The proof follow the same idea as discussed in Sec.~\ref{sec:online} and is omitted here. When $\tilde{\pi}< N$, we simply recover Lemma~\ref{thm:step-II-class-I} with the condition~\eqref{eq:phi_1_condition}. Together with Theorem~\ref{thm:step-I}, we show the results in ~\eqref{eq:compeptitve-ratio-class-general} when $\tilde{\pi}\geq N$. As for the case $\tilde{\pi}< N$, we can recover Theorem~\ref{thm:AP-small-N} by noting that $\Phi_{\tilde{\mathcal{G}}}(\tilde{\pi})\leq\frac{1}{\tilde{\pi}}\Phi_{\tilde{\mathcal{G}}}(1)\leq C$ (due the concavity of the revenue functions), i.e., \textsf{CR-Pursuit($\tilde{\pi}$)} is feasible and $\tilde{\pi}$-competitive for \probnameS.

For example, we can consider the one-way trading with price elasticity problem with multiple inventories, where the single-inventory case is proposed in~\cite{cr_pursuit2018}. More specifically, we consider the following type of revenue function, which we denote as $\hat{\mathcal{G}}$. 
\begin{itemize}
      \item $g_{i,t}(v_{i,t})=\left(p_{i,t}-q_{i,t}(v_{i,t})\right)\cdot v_{i,t}$, $v_{i,t}\in [0,\delta_{i,t}]$, where $q_{i,t}(\cdot)$ is convex increasing function with $q_{i,t}(0)=0$ and $p_{i,t}\in[p_{\min},p_{\max}]$. \footnote{{There exist an $v$ (may be infinity) such the function $g_{i,t}(\cdot)$ is increasing in $[0,v]$ and decreasing in $[v,\infty]$. We only need to consider the case that $\delta_{i,t}\leq v$ as no reasonable algorithm would allocate more than $v$ at $g_{i,t}(\cdot)$. Thus, we consider that $g_{i,t}(\cdot)$ is increasing in $[0,\delta_{i,t}]$.}}
\end{itemize}

The revenue functions in $\hat{\mathcal{G}}$ consider that the price of selling (allocating) the inventory decreases as the supply (allocation) increases, which follows the basic law in supply and demand in microeconomics. In particular, the price elasticity is captured by a convex increasing function $q_{i,t}(\cdot)$, meaning that more supply would further decrease the price. The marginal revenue is bounded between $p_{\min}$ and $p_{\max}$ only at the origin and could be even zero otherwise, which implies $\hat{\mathcal{G}}$ is not covered by $\mathcal{G}$. 

According to Lemma 15 in~\cite{cr_pursuit2018} (while it does not consider rate limit constraint, we can check that the proof simply follows with limit constraint), we have 
\begin{equation}
    \Phi_{\hat{G}}(1)\leq 2\cdot (\ln\theta+1) \cdot C.  
\end{equation}

For \probname~under revenue  function set, $\hat{\mathcal{G}}$, we have 

\begin{prop}
\label{thm:competitive-ratio-class-II}
Let $\pi_2 =2\cdot(\ln\theta+1)$. \algname($\pi_{2}$) achieves the following competitive ratio for \probname~under revenue function set $\hat{\mathcal{G}}$, 
\begin{equation}
\label{eq:compeptitve-ratio-class-I}
\mathcal{CR}_{\hat{\mathcal{G}}}(A\&P(\pi_{2})) = 
\begin{cases}
\pi_{2}, & \pi_{2} \geq N,\\
\frac{e^\frac{1}{\pi_2}}{e^\frac{1}{\pi_2}-1}, & \pi_{2}< N.
\end{cases}
\end{equation}

\end{prop}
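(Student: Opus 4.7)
The plan is to derive this result as a direct corollary of the general framework Proposition stated just above it. I instantiate that Proposition with revenue function class $\tilde{\mathcal{G}}=\hat{\mathcal{G}}$ and augmentation parameter $\tilde{\pi}=\pi_2 = 2(\ln\theta+1)$; the only real obligation is then to verify its hypothesis $\Phi_{\hat{\mathcal{G}}}(1) \leq \pi_2 \cdot C$ for \probnameS~under $\hat{\mathcal{G}}$. Once this bound is in hand, plugging $\tilde{\pi}=\pi_2$ into the piecewise CR expression of the general Proposition reproduces the claimed two-regime bound verbatim.

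First I would establish the single-inventory allocation bound $\Phi_{\hat{\mathcal{G}}}(1) \leq 2(\ln\theta+1)\cdot C$. Lemma~15 of~\cite{cr_pursuit2018} proves exactly this inequality for \crp~with $\pi=1$ applied to the one-way-trading-with-price-elasticity revenue class $\hat{\mathcal{G}}$. The only technical nuisance is that~\cite{cr_pursuit2018} does not impose per-slot rate-limits $\delta_t$, whereas our \probnameS~does. However, adding the rate constraint $\hat v_t \leq \delta_t/\pi$ at each slot only restricts the set of adversarial inputs over which the supremum defining $\Phi_{\hat{\mathcal{G}}}(1)$ is taken, so the supremum cannot increase and the bound of~\cite{cr_pursuit2018} transfers. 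A one-line monotonicity remark suffices to cover this extension, and I would state it as a short auxiliary lemma.

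Next I apply the general Proposition. In the small-$N$ branch ($\pi_2\geq N$), the argument of Theorem~\ref{thm:AP-small-N} carries over to $\hat{\mathcal{G}}$ once I observe, using concavity of each $g_{i,t}\in\hat{\mathcal{G}}$ (inherited from convexity of $q_{i,t}$ together with $g_{i,t}(0)=0$), that $\Phi_{\hat{\mathcal{G}}}(\pi_2)\leq \Phi_{\hat{\mathcal{G}}}(1)/\pi_2\leq C$, so \crp~with $\pi=\pi_2$ is feasible and $\pi_2$-competitive on each decomposed single-inventory subproblem; the allowance and rate-limit feasibility checks are identical to those in the proof of Theorem~\ref{thm:AP-small-N}. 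In the large-$N$ branch ($\pi_2<N$), the \textsf{Step-II} analog of Lemma~\ref{thm:step-II-class-I} follows in the same way from the scaled bound $\Phi_{\hat{\mathcal{G}}}(\pi_2)\leq C$, while Theorem~\ref{thm:step-I} governing \textsf{Step-I} is already stated to hold for arbitrary increasing, differentiable, concave functions that vanish at the origin (the remark immediately following the theorem explicitly emphasizes this) and therefore applies directly to $\hat{\mathcal{G}}$. Composing the two steps exactly as in the proof of Theorem~\ref{thm:AP-general-N} then yields the CR $\frac{e^{1/\pi_2}}{e^{1/\pi_2}-1}$.

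The main, and really only, obstacle is the rate-limit extension of Lemma~15 in~\cite{cr_pursuit2018}; the remainder is mechanical substitution into an already-established framework. I would therefore foreground the brief monotonicity argument first, and then reduce the rest of the proof to a short chain of invocations of the general Proposition together with Theorems~\ref{thm:AP-small-N} and~\ref{thm:AP-general-N}.
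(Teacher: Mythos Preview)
Your overall strategy matches the paper's: instantiate the general framework Proposition with $\tilde{\pi}=\pi_2$ and invoke Lemma~15 of~\cite{cr_pursuit2018} to verify the hypothesis $\Phi_{\hat{\mathcal{G}}}(1)\leq \pi_2\cdot C$. The paper does exactly this, with only a parenthetical remark on the rate-limit extension, and your treatment of the two branches (small $N$ via Theorem~\ref{thm:AP-small-N}, large $N$ via Lemma~\ref{thm:step-II-class-I} plus Theorem~\ref{thm:step-I}) is also the same.

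However, your monotonicity argument for the rate-limit extension is backwards and would fail. You claim that adding per-slot rate constraints ``only restricts the set of adversarial inputs over which the supremum defining $\Phi_{\hat{\mathcal{G}}}(1)$ is taken, so the supremum cannot increase.'' In fact the inclusion goes the other way: the no-rate-limit setting of~\cite{cr_pursuit2018} is a \emph{special case} of the rate-limited setting (take every $\delta_t\geq C$), so the rate-limited supremum is taken over a \emph{larger} input set, and your inclusion only yields $\Phi_{\text{no-rate}}(1)\leq\Phi_{\text{rate-limited}}(1)$, the wrong inequality. Nor is there a simple per-slot domination to fall back on: the $\hat v_t$ in \crp\ are determined by the increments $OPT_S(t)-OPT_S(t-1)$, and these increments depend on the rate limits through the offline optimum itself, so tightening $\delta_t$ does not pointwise shrink $\hat v_t$. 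The paper does not attempt any monotonicity shortcut; as it already did for the class~$\mathcal{G}$ in Appendix~\ref{proof:upper-bound}, it asserts that one must re-walk the proof of Lemma~15 in~\cite{cr_pursuit2018} with the domain of each $g_t$ replaced by $[0,\delta_t]$ and verify that every step still goes through. Your write-up should be amended to do this rather than rely on the one-line monotonicity remark.
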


The CR we achieve is upper bounded by $2\ln\theta+3$, which is up to a constant factor multiplying the lower bound $\ln\theta+1$. This provides the first CR  of the \probname~under revenue function set $\hat{\mathcal{G}}$ with application to the one-way trading with price elasticity under multiple-inventory scenario. It is interesting to see whether we can fine a tiger bound on $\Phi_{\hat{\mathcal{G}}}(1)$ and achieve a better competitive ratio. In addition, while the determination of $\Phi_{\tilde{G}}(1)$ could be problem-specific, how we can show a general way for it would be another interesting future direction. 










\section{Concluding Remarks}\label{sec:conclusion}

In this work, we study the competitive online optimization problem under multiple inventories, \probname. {The online decision maker allocates the capacity-limited inventories to maximize the overall revenue, while the revenue functions and the allocation constraints at each slot come in an online manner.} Our key result is a divide-and-conquer approach that reduces the multiple-inventory problem to the single-inventory case with a small optimality loss in terms of the CR.  In particular, we show that our approach achieves the optimal when $N$ is small and is within an additive constant to the lower bound when $N$ is larger, when considering gradient bounded revenue functions. We also provide a general condition for applying our approach to broader applications with different interesting sets of revenue functions. In particular, for revenue functions appeared in one-way trading with price elasticity, our approach obtains an optimal CR for the problem that is up to a constant factor to the lower bound. As a by-product, we also provide the first allowance augmentation results for the online fractional matching problem and the online fraction allocation problem with free disposal. As a future direction, we are interested in how our divide-and-conquer approach can be used to solve other online optimization problems with multi-entity and how it can be applied in more application scenarios.



\bibliographystyle{unsrt}
\bibliography{ref.bib}

\appendix
\section*{APPENDIX}
\section{Proof of Proposition~\ref{thm:state-of-the-art}}\label{app:state-of-the-art}

Our proof follow the well-established online primal-and-dual approach~\cite{Sun2020Competitive,buchbinder2009design,ma2020algorithms,niazadeh2013unified}, etc.

We note that according to Appendix E of ~\cite{ma2020algorithms}. The threshold function is increasing and satisfies the following conditions. 
\begin{align}
    C_i \phi_i'(w)-\phi_i(w)\leq p_{\min}(\tilde{\chi}-1), & \;w\in(0,\chi\cdot C_i); \label{eq:condition-1}\\
    C_i \phi_i'(w)-\tilde{\chi}\cdot\phi_i(w)\leq 0, &\; w\in(\chi\cdot C_i,C_i). \label{eq:condition-2}
\end{align}

Accordingly, it simply applies that 
\begin{align}
    C_i \phi_i(w)-\int_{0}^{w}\phi_{i}(w)dw\leq p_{\min}(\tilde{\chi}-1)\cdot w, & \;w\in(0,\chi\cdot C_i); \label{eq:condition-1-re}\\
    C_i \phi_i(w)-\int_{0}^{w}\phi_i(w)\leq (\tilde{\chi}-1) \cdot \left(p_{min}\cdot \chi\cdot C_i+\int_{\chi\cdot C_i}^{w}\phi_i(w)dw\right), &\; w\in(\chi\cdot C_i,C_i). \label{eq:condition-2-re}
\end{align}

In the primal-and-dual framework, it applies the dual problem as a baseline for the offline optimal. The dual problem of \probname~at slot $T$,
\begin{align}
    \textsf{Dual-\probname}:\;\quad & \min \sum_{i,t\in[T]}h_{i,t}(\alpha_i+\beta_t) + \sum_{i}{C_i\alpha_i}+\sum_{t}{A_{t}\beta_{t}}\\
    \text{s.t.} \quad& \alpha_i\geq 0, \beta_t\geq 0, \forall t,i, \label{eq:dual-owfm-var-D}
\end{align}
where 
\begin{equation}
h_{i,t}(\lambda)=\max_{0\leq v\leq \delta_{i,t}} g_{i,t}(v)-\lambda\cdot v.
\label{eq:dual_h}
\end{equation}

We denote the online solution of the algorithm as $\bar{v}_{i,t}$, which is the optimal solution to problem  \textsf{(P\&D)} in~\eqref{eq:primal-dual-algorithm}. Recall that $w_{i,t}$ is the online total allocation of the algorithm from slot $1$ to slot $t$, i.e., 
\begin{equation}
    w_{i,t}=\sum_{\tau=1}^{t}\bar{v}_{i,\tau}
\end{equation} 
At each slot $t$, we denote the optimal dual solution of problem \textsf{(P\&D)} in~\eqref{eq:primal-dual-algorithm} associated with constraint~\eqref{eq:allowance-D} as $\hat{\beta}_t$. Note that according to KKT conditional, we have
\begin{equation}
\label{eq:dual-D-beta-allowance}
    \hat{\beta}_t\cdot(\sum_{i\in[N]}\bar{v}_{i,t}-A_t) = 0.
\end{equation}

We consider the following dual solution,
\begin{equation}
    \alpha_{i}  = \phi_{i}(w_{i,T}),\forall i; \beta_t  = \hat{\beta}_t,\forall t\in[T].
\label{eq:dual-baseline-owfm-D}
\end{equation}
We note that the dual variable satisfies the dual constraint~\eqref{eq:dual-owfm-var-D}. Then, we have
\begin{align}
    OPT_{T} \leq & \sum_{i\in[N],t\in[T]} h_{i,t}(\alpha_i+\beta_t) +\sum_{i\in[N]}C_i\alpha_i+\sum_{t\in[T]}A_t\beta_t\\
    = & \sum_{i\in[N],t\in[T]} h_{i,t}(\phi_{i}(w_{i,T})+\hat{\beta}_t) +\sum_{i\in[N]}C_i\phi_{i}(w_{i,T})+\sum_{t\in[T]}A_t\hat{\beta}_t\\
   \stackrel{(a)}{\leq} & \sum_{i\in[N],t\in[T]} h_{i,t}(\phi_{i}(w_{i,t})+\hat{\beta}_t) +\sum_{i\in[N]}C_i\phi_{i}(w_{i,T})+\sum_{t\in[T]}A_t\hat{\beta}_t \\
    \stackrel{(b)}{\leq} & \sum_{i\in[N],t\in[T]} \left[g_{i,t}(\bar{v}_{i,t})-(\phi_{i}(w_{i,t})+\hat{\beta}_t)\bar{v}_{i,t}\right]+\sum_{i\in[N]}C_i\phi_{i}(w_{i,T})+\sum_{t\in[T]}A_t\hat{\beta}_t\\
    \stackrel{(c)}{=}& \sum_{i\in[N],t\in[T]}\left[g_{i,t}(\bar{v}_{i,t})-\phi_{i}(w_{i,t})\bar{v}_{i,t}\right]+\sum_{i\in[N]}C_i\phi_{i}(w_{i,T})\\
    \stackrel{(d)}{\leq} &  \sum_{i\in[N],t\in[T]}g_{i,t}(\bar{v}_{i,t})+\sum_{i\in[N]}\left[C_i\phi_{i}(w_{i,T})-\int_0^{w_{i,T}} \phi_i(w) dw\right]\\
    \stackrel{(e)}{\leq} & \tilde{\chi}\cdot\sum_{i\in[N],t\in[T]}g_{i,t}(\bar{v}_{i,t})
\end{align}
where (a) is due to the non-decreasing of $\phi_i(\cdot)$ and $h_{i,t}(\cdot)$ defined in~\eqref{eq:dual_h}; (b) is due to $\bar{v}_{i,t}$ is the optimal solution to~\eqref{eq:dual_h} when $\lambda=\phi_{i}(w_{i,t})+\hat{\beta}_t$ by checking that the KKT conditions of the problem \textsf{(P\&D)} in~\eqref{eq:primal-dual-algorithm}; (c) is according to~\eqref{eq:dual-D-beta-allowance}; (d) is according to $\phi_{i}(w_{i,t})\bar{v}_{i,t}\geq \int_{w_{i,t}-\bar{v}_{i,t}}^{w_{i,t}} \phi_i(w) dw$; (e) is by the fact that
\begin{equation}
\label{eq:key-D}
   C_i\phi_{i}(w_{i,T})-\int_0^{w_{i,T}} \phi_i(w) dw \leq (\tilde{\chi}-1)\sum_{t\in[T]}g_{i,t}(\bar{v}_{i,t}).
\end{equation}

We show~\eqref{eq:key-D} in the following. When $w_{i,T}\leq \chi\cdot C_i$, it directly follows~\eqref{eq:condition-1-re} and $\sum_{t\in[T]}g_{i,t}(\bar{v}_{i,t})\geq p_{min}\cdot w_{i,T}$.  When $w_{i,T}\geq \chi\cdot C_i$, it follows ~\eqref{eq:condition-2-re} and that
\begin{equation}
   \sum_{t\in[T]} g_{i,t}(\bar{v}_{i,t})\geq p_{\min}\cdot \chi\cdot C_i + \int_{\chi\cdot C_i}^{w_{i,T}}\phi_i(w)dw
\end{equation}
by the fact that if $\bar{v}_{i,t}>0$,
\begin{equation}
    g'_{i,t}(v)\geq max \{p_{\min}, \phi_i(w_{i,{t-1}}+v)\},\forall v\leq\bar{v}_{i,t}
    \label{eq:gradient}
\end{equation}
~\eqref{eq:gradient} is due to that $g'_{i,t}(\bar{v}_{i,t})\geq max \{p_{\min}, \phi_i(w_{i,t})\}$, which follows the concavity of $g_{i,t}(\cdot)$, non-decreasing property of $\phi_i(\cdot)$, and $g'_{i,t}(\bar{v}_{i,t})\geq \phi_i(w_{i,t})$, if $\bar{v}_{i,t}>0$ by the KKT condition of \textsf{(P\&D)} in~\eqref{eq:primal-dual-algorithm}.

\section{Proof of Lemma~\ref{thm:upper-bound-class-I}} 
\label{proof:upper-bound}

We first consider a new class of revenue function,
\begin{itemize}
    \item $g_t(v_t)$ is concave, increasing and differentiable with $g_t(0)=0$;
    \item $\frac{g'(0)}{g_t(\delta_t)/\delta_t}\leq \xi$;
    \item $g'_t(0)\in [p_{\min},p_{\max}]$
\end{itemize}
where $\xi$ is a given parameter. We denote this class of revenue function as \textsf{Class$_{\xi}$}. 
 We can generalize the results (Theorem 8) in~\cite{cr_pursuit2018} by taking rate limit into consider and obtain the following proposition
\begin{prop}
\label{thm:upper-bound-general} For \textsf{Class$_{\xi}$} revenue function, we have
\begin{equation}
    \Phi_{\xi}(\pi)\leq \xi\cdot( \ln\theta+1)\cdot\frac{C}{\pi}
\end{equation}
\end{prop}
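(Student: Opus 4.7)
The plan is to extend the adversarial-instance analysis of Theorem~8 in~\cite{cr_pursuit2018} to accommodate the per-slot rate limits $\delta_t$. The argument proceeds in three stages: (i) a per-slot inequality bounding $\hat v_t$ by an increment of $OPT_S$, (ii) a reduction to a canonical worst-case input, and (iii) a short integral bound.

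For (i), let $p_t := g_t'(0) \in [p_{\min},p_{\max}]$. By Lemma~\ref{thm:single-rate-limit-feasibility}, $\hat v_t \leq \delta_t/\pi \leq \delta_t$, so $\hat v_t$ lies in the interval where the \textsf{Class$_{\xi}$} hypothesis applies. Concavity of $g_t$ through the origin gives the chord inequality $g_t(\hat v_t) \geq (g_t(\delta_t)/\delta_t)\,\hat v_t$, and the \textsf{Class$_{\xi}$} definition further yields $g_t(\hat v_t) \geq (p_t/\xi)\,\hat v_t$. Combined with the defining pursuit rule $g_t(\hat v_t) = (OPT_S(t) - OPT_S(t-1))/\pi$, this produces the per-slot bound $\hat v_t \leq (\xi/(\pi p_t))\,(OPT_S(t) - OPT_S(t-1))$, which summed over $t$ gives
\[
\Phi_{\xi}(\pi) \;\leq\; \frac{\xi}{\pi} \sum_t \frac{OPT_S(t) - OPT_S(t-1)}{p_t}.
\]

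For (ii), I would argue that the worst-case input can be taken to be a sequence of linear revenue functions (which saturate the chord inequality) with non-decreasing slopes $p_t$ and rate limits $\delta_t$ enlarged to $C$ (which relax the right-hand side without tightening the left). Non-monotone slopes can be sorted by an adjacent-exchange argument that exploits the symmetry of $OPT_S$ under a permutation of slots. In the continuous limit, with the adversary sweeping the price from $p_{\min}$ to $p_{\max}$ and $R(p)$ denoting the resulting offline value, the bound becomes $\Phi_{\xi}(\pi) \leq (\xi/\pi)\int_0^{R_{\max}} dR/p(R)$. For (iii), the crude but universal bound $R(p) \leq pC$ (the offline cannot exceed total capacity times the maximum observed price) gives $p(R) \geq \max(p_{\min}, R/C)$. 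Splitting the integral at $R = p_{\min}C$ and evaluating the two pieces elementarily yields $\int dR/p(R) \leq C(1 + \ln\theta)$, so $\Phi_{\xi}(\pi) \leq \xi(\ln\theta+1)C/\pi$.

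The main obstacle is the reduction in stage (ii): with rate limits in play, both $\hat v_t$ and $\Delta OPT_S(t)$ depend on the history of $\{\delta_\tau\}$, so the exchange argument requires care to check that swapping adjacent slots, or enlarging $\delta_t$ to $C$, only weakly increases each per-slot summand $\Delta OPT_S(t)/p_t$. I expect this to follow from the monotonicity of $OPT_S$ in $\{\delta_\tau\}$ together with a standard rearrangement of the telescoping sum under a sorting permutation; once the worst case is pinned to the linear-with-monotone-slopes template, the rest is a one-line integral.
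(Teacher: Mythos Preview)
Your stages (i) and (iii) are sound, and the paper's own treatment is just a one-line deferral to the step-by-step argument of Theorem~8 in~\cite{cr_pursuit2018} with the domain of each $g_t$ shrunk from $[0,C]$ to $[0,\delta_t]$, so the overall route you sketch is the intended one. The issue is in stage~(ii), where one of the two monotonicity claims you rely on is false as stated.

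You assert that enlarging $\delta_t$ to $C$ ``only weakly increases each per-slot summand $\Delta OPT_S(t)/p_t$,'' and you expect this to follow from the monotonicity of $OPT_S$ in $\{\delta_\tau\}$. It does not. Take $C=1$, linear revenues with $p=(10,1,5)$, and $\delta=(0.1,0.1,1)$: then $OPT_S=(1,\,1.1,\,5.5)$ and $\sum_t \Delta_t/p_t = 0.1+0.1+0.88 = 1.08$. Enlarging $\delta_1$ to $1$ gives $OPT_S=(10,10,10)$ and the sum drops to $1$; the summands at $t=2,3$ both strictly decrease. Monotonicity of $OPT_S$ in $\delta$ is true, but increments $\Delta_t$ (and hence the weighted telescoping sum) are not monotone in $\delta$ when the prices are not sorted.

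The fix is to perform the reductions in the right order and to justify each with the correct tool. First sort the slots into nondecreasing $p_t$ by adjacent exchanges: the swap inequality reduces to showing that the marginal gain of adding a slot to a \emph{smaller} prefix is at least its marginal gain when added to a \emph{larger} prefix, which is submodularity of the set function $S\mapsto \max\{\sum_{s\in S} g_s(v_s):\sum v_s\le C,\,0\le v_s\le\delta_s\}$; this is the content your ``adjacent-exchange'' remark needs to carry, and it holds for general concave $g_s$. Only \emph{after} sorting do the remaining reductions go through: writing $\sum_t \Delta_t/p_t = \sum_t OPT_S(t)\,(1/p_t-1/p_{t+1})$ with $1/p_{T+1}:=0$, the coefficients $1/p_t-1/p_{t+1}$ are now nonnegative, so replacing each concave $g_t$ by its tangent line $p_t v$ (which raises every $OPT_S(t)$) and then enlarging each $\delta_t$ to $C$ (which again raises every $OPT_S(t)$) can only increase the Abel sum. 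You then land on the canonical linear-increasing-price instance with no rate limits, where your integral in stage~(iii) gives $C(1+\ln\theta)$, and combining with stage~(i) yields the claimed $\Phi_\xi(\pi)\le \xi(\ln\theta+1)C/\pi$.
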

We omit the detailed proof here as it is by simply checking the proof in~\cite{cr_pursuit2018} step-by-step when considering revenue function $g_t(v_t)$ defined over $v_t\in[0,\delta_t]$ instead of $v_t\in[0,C]$. 

We now turn to the proof of Lemma~\ref{thm:upper-bound-class-I}. 

 

\begin{proof}[Proof of  Lemma~\ref{thm:upper-bound-class-I}]

We proof the lemma by showing that for any $\epsilon>0$, $\Phi_{1}(\pi)\leq (1+\epsilon)(\ln\theta+1)\cdot\frac{C}{\pi}$. 

For any function $g_t(v_t),v_t\in[0,\delta_t]$, we can construct a sequence of functions as follows. We begin by finding the maximum $v^{(1)}\leq\delta_t$ such that $g'_t(v^{(1)})\geq g_t'(0)/(1+\epsilon)$. We define $g^{(1)}_t(v)=g_t(v),v\in[0,v^{(1)}]$. We then find the maximum $v^{(2)}\leq \delta_t$ such that $g'_t(v^{(2)})\geq g_t'(v^{(1)})/(1+\epsilon)$. We define  $g^{(2)}_t(v)=g_t(v^{(1)}+v)-g_t(v^{(1)}),v\in[0,v^{(2)}-v^{(1)}]$. We continue the steps until we arrive at $\delta_t$. Suppose in total there are $k_t$ functions we construct, and they are $\{g^{(i)}(v)\}_{i\in k_t}$, where $g^{(i)}(v)=g_t(v^{(i-1)}+v)-g_t(v^{(i-1)}), v\in[0,v^{(i)}-v^{(i-1)}]$. Also, $v^{(k)}=\delta_t$. We can easily check that $\{g^{(i)}(v)\}_{i\in k_t}$ belongs to  \textsf{Class$_{\xi=1+\epsilon}$}. 

For any $\sigma\in\Sigma$, suppose there is a revenue function $g_t(v_t)$ not belonging to \textsf{Class$_{\xi=1+\epsilon}$}, we can construct $\{g^{(i)}(v)\}_{i\in k_t}$ following the above procedure and replace  $g_t(v_t)$ in $\sigma$. We denote the new input as $\tilde{\sigma}$. We can show that the replacement will not decreasing the total allocation of \crp,  We note that the output of \crp~does not change at $\tau\neq t$ as the venue function and increment of the optimal objective remains the same at those slots. Thus it is sufficient to show that
\begin{equation}
  \hat{v}_t  \leq  \sum_{i=1}^{k_t}{\tilde{v}_t^{(i)}} \label{eq:show-increment}
\end{equation}
, where $\hat{v}_t$ is the output of \crp~ under $\sigma$ at slot $t$ and $\tilde{v}_t^{(i)}$ is the output of \crp~ under $\tilde{\sigma}$ at function $g_t^{(i)}(\cdot)$, for $i\in[k_t]$. Following the \crp~algorithm, we have
\begin{equation}
    g_t(\hat{v}_t) = \frac{1}{\pi}(OPT(t)-OPT(t-1)) =  \sum_{i=1}^{k_t} g_t^{(i)}({\tilde{v}_t^{(i)}})
\end{equation}
According to our construction and concavity of $g_t(\cdot)$, we have 
\begin{align}
    \sum_{i=1}^{k_t} g_t^{(i)}({\tilde{v}_t^{(i)}}) & = \sum_{i=1}^{k_t} \left[ g_t(v^{(i-1)}+{\tilde{v}_t^{(i)}})-g_t(v^{(i-1)}) \right] \\
    & \stackrel{(a)}{\leq} \sum_{i=1}^{k_t} \left[ g_t( \sum_{j=1}^{i-1}{\tilde{v}_t^{(j)}}+{\tilde{v}_t^{(i)}})-g_t( \sum_{j=1}^{i-1}{\tilde{v}_t^{(j)}})\right] \\
    & = g_t( \sum_{i=1}^{k_t}{\tilde{v}_t^{(i)}}) 
\end{align}
, where (a) is due to the concavity of $g_t(\cdot)$ and the fact that $\sum_{j=1}^{i-1}{\tilde{v}_t^{(j)}}\leq \sum_{j=1}^{i-1}{({v}^{(j)}-{v}^{(j-1)})}\leq v^{(i-1)}, \forall i\in[t_k]$. Thus, we have $ g_t(\hat{v}_t) \leq  g_t( \sum_{i=1}^{k_t}{\tilde{v}_t^{(i)}})  $ and conclude~\eqref{eq:show-increment}. 

This directly implies that we can replace all functions by a sequence of \textsf{Class$_{\xi=1+\epsilon}$} functions without decreasing the total allocation of \crp. Thus, $\Phi(\pi)\leq \Phi_{\xi=1+\epsilon}(\pi)\leq (1+\epsilon)\cdot (\ln\theta+1)\cdot\frac{C}{\pi},\forall \epsilon>0$. And, we conclude $\Phi(\pi)\leq(\ln\theta+1)\cdot\frac{C}{\pi}$



\end{proof}

\section{Proof of Theorem~\ref{thm:step-I}}
\label{app:thm:step-I}





We first show a useful proposition.
\begin{prop}
\label{thm:psi_increment}
$\Psi_{i,t}(a)$ is non-decreasing in $a$.
\end{prop}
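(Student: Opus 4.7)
The plan is to simplify $\Psi_{i,t}(a)$ using the self-referential derivative structure of $f_i$, and then reduce the monotonicity claim to a one-dimensional envelope argument on the value function $G_{i,t}$.

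First, observe that the definition~\eqref{eq:defn_f-C} satisfies $f_i(x) = \pi C_i\cdot f_i'(x)$, so the second summand of $\Psi_{i,t}$ equals $\int_{0}^{C_i} G_{i,t}(x,a)\, f_i'(x)\, dx$. Integration by parts, together with $G_{i,t}(0,a)=0$ (forced by the budget constraint in~\eqref{eq:dfn_G_x} at $x=0$), cancels the boundary term against the first summand of $\Psi_{i,t}$ and yields
\begin{equation*}
\Psi_{i,t}(a) \;=\; \int_{0}^{C_i} \frac{\partial G_{i,t}(x,a)}{\partial x}\, f_i(x)\, dx.
\end{equation*}
Since $f_i>0$ on $[0,C_i]$, it suffices to show that $\partial G_{i,t}(x,a)/\partial x$ is non-decreasing in $a$ for every fixed $x \in [0,C_i]$.

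To establish this pointwise monotonicity I would isolate slot $t$ from the earlier slots. Let
\begin{equation*}
\hat G_{-t}(y) \;\triangleq\; \max\Bigl\{\sum_{\tau<t}\tilde g_{i,\tau}(v_\tau)\,:\; \sum_{\tau<t}v_\tau \le y,\; 0 \le v_\tau \le \hat a_{i,\tau}\Bigr\},
\end{equation*}
which is concave and non-decreasing. Conditioning on $z := v_{i,t}$ decomposes the problem~\eqref{eq:defn_G-C} as
\begin{equation*}
G_{i,t}(x,a) \;=\; \max_{0\le z\le \min(x,a)} \bigl[\, \tilde g_{i,t}(z) + \hat G_{-t}(x-z) \,\bigr],
\end{equation*}
whose largest optimizer has the form $z^*(x,a) = \min\bigl(a,\, z^{\mathrm{unc}}(x)\bigr)$, where $z^{\mathrm{unc}}(x)$ is the largest maximizer on $[0,x]$ of the objective without the cap $z\le a$. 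In particular $z^*(x,a)$ is non-decreasing in $a$, so $x - z^*(x,a)$ is non-increasing in $a$. By the envelope theorem, at points of differentiability $\partial G_{i,t}(x,a)/\partial x = \hat G_{-t}^{\prime}\bigl(x - z^*(x,a)\bigr)$, and concavity of $\hat G_{-t}$ makes $\hat G_{-t}^{\prime}(\cdot)$ non-increasing. Composing these two monotonicities shows that $\partial G_{i,t}(x,a)/\partial x$ is non-decreasing in $a$, which together with the first step proves the proposition.

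The main obstacle will be the regularity bookkeeping. Since $G_{i,t}(\cdot,a)$ and $\hat G_{-t}$ are concave but not necessarily $C^1$, both the envelope step and the integration by parts should be phrased using one-sided derivatives (or the absolute-continuity version of the fundamental theorem of calculus, which is automatic from concavity). A secondary subtlety is the corner case $z^*(x,a)=x$ (the whole budget is routed to slot $t$), where $\partial G_{i,t}(x,a)/\partial x = \tilde g_{i,t}'(x)$ rather than $\hat G_{-t}^{\prime}(0)$; this case arises only when $a\ge x$ and makes $\partial G_{i,t}(x,a)/\partial x$ independent of $a$, so monotonicity across the transition $a\uparrow x$ is preserved by the KKT condition $\tilde g_{i,t}'(x)\ge \hat G_{-t}^{\prime}(0)$ that characterizes this corner.
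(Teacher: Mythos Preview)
Your proposal is correct and follows the same two-step structure as the paper: first rewrite $\Psi_{i,t}(a)$ via integration by parts as $\int_0^{C_i} f_i(x)\,\partial_x G_{i,t}(x,a)\,dx$, then argue that $\partial_x G_{i,t}(x,a)$ is non-decreasing in $a$ for each fixed $x$. The only difference is in the second step: the paper dispatches it in one line via sensitivity analysis (identifying $\partial_x G_{i,t}$ with the optimal dual multiplier $\eta^*$ on the budget constraint and asserting $\eta^*$ is non-decreasing in $a$ by KKT), whereas you unpack the same fact through a primal decomposition and the envelope theorem---these are dual and primal views of the same monotone-comparative-statics argument, and your version is simply more explicit.
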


\begin{proof}

By integration by parts,
\begin{align}
\int_0^{c_i} f_i(x) \frac{\partial G_{i,t}(x,a)}{\partial x} dx
=& \int_0^{c_i} f_i(x) dG_{i,t}\left(x,a\right) \\
=&  \left.f_i(x)\cdot G_{i,t}\left(x,a\right)\right|_{0}^{C_i}-\frac{1}{\pi\cdot C_i}\int_0^{C_i} f_i(x) G_{i,t}\left(x,a\right)dx\\
 =&f_i(C_i) G_{i,t}\left(C_i,a\right)-\frac{1}{\pi\cdot C_i}\int_0^{C_i} f_i(x) G_{i,t}\left(x,a\right)dx\\
=  &\Psi_{i,t}(a) \label{eq:alter_psi}
\end{align}

And according to the sensitivity analysis of the optimization problem defining $G_{i,t}(x,a)$, we have 
\begin{equation}
    \frac{\partial G_{i,t}(x,a)}{\partial x} = \eta^*.
\end{equation}
where  $\eta^* $ is the optimal dual variable associated with constraint~\eqref{eq:dfn_G_x}. We can check that $\eta^*$ is non-decreasing in $a$ by the KKT condition, and thus $\Psi_{i,t}(a)$ is non-decreasing in $a$.
\end{proof}

\begin{proof}[Proof of Theorem~\ref{thm:step-I}]
We adapt the primal-and-dual framework~\cite{buchbinder2009design,Feldman2009Online} to prove the theorem. 
We begin with the revenue increment of our algorithm \textsf{\allallt($\pi$)-A}~at slot $t$, denoted as $\Delta P$. According to~ \eqref{eq:defn_G-C},
\begin{equation}
    \Delta P \triangleq \sum_{i} \left( \tilde{OPT}_{i,t} - \tilde{OPT}_{i,t-1} \right)= \sum_{i}\left(G_{i,t}(C_i,\hat{a}_{i,t})-G_{i,t}(C_i,0)\right).\label{eq:primal-increment-C}
\end{equation}

We note that the optimal solution of \textsf{\allallt($\pi$)-A}  satisfies the KKT condition,
\begin{align}
     & \tilde{g}'_{i,t}(\hat{a}_{i,t})-\Psi_{i,t}(\hat{a}_{i,t})-\tilde{\beta}_t-\tilde{\gamma}_{i,t}+\tilde{\sigma}_{i,t} =0,\forall i\label{eq:kkt-aat-gradient-C}\\
    & \tilde{\beta}_t\left(\sum_i \hat{a}_{i,t}-\pi\cdot A_t\right) =0,\label{eq:kkt-aat-beta-C}\\
    & \tilde{\sigma}_{i,t}\cdot \hat{a}_{i,t}=0, \forall i,\label{eq:kkt-aat-sigma-C}\\
    & \tilde{\gamma}_{i,t}\left(\hat{a}_{i,t}- \pi \cdot \delta_{i,t}\right) = 0, \forall i,\label{eq:kkt-aat-gamma-C}
\end{align}
, where $\tilde{\beta}_t\geq 0$ and $\tilde{\sigma}_t^i\geq 0,\tilde{\gamma}_t^i\geq 0,\forall i$ are dual variables corresponding to constraints \eqref{eq:AAt-allowance-C} and \eqref{eq:AAt-rate}). 

We first write down the dual problem of \probname~at slot $t$,
\begin{align}
    \textsf{Dual-\probname}:\;\quad & \min \sum_{i,\tau\in[t]}h_{i,\tau}(\alpha_i+\beta_\tau) + \sum_{i}{C_i\alpha_i}+\sum_{\tau}{A_{\tau}\beta_{\tau}}\\
    \text{s.t.} \quad& \alpha_i\geq 0, \beta_t\geq 0, \forall t,i, \label{eq:dual-owfm-var-C}
\end{align}
where 
\begin{equation}
h_{i,\tau}(\lambda)=\max_{0\leq v\leq \delta_{i,\tau}} g_{i,\tau}(v)-\lambda\cdot v.
\end{equation}

We compare our online increment with the following dual solution. At slot $t$, we update the dual variable $\{\alpha_{i,t}\}_{i\in[N]}$, determine the dual variable $\beta_t$,
\begin{equation}
    \alpha_{i,t}  = \Psi_{i,t}(\hat{a}_{i,t}),\forall i; \beta_t  = \tilde{\beta}_t.
\label{eq:dual-baseline-owfm-C}
\end{equation}
We note that the dual variable satisfies the dual constraint~\eqref{eq:dual-owfm-var-C}.

Let the increment of the dual objective by above dual solutions at each slot as $\Delta D$. To proof the theorem, the most important step in the framework is to show that at each slot, we have  
\begin{equation}
    \Delta D \leq \frac{1}{\pi}\frac{e^\frac{1}{\pi}}{e^{\frac{1}{\pi}}-1} {\Delta P} 
    \label{eq:pd-key-C}
\end{equation}

We can now compute the increment of the dual,
\begin{align}
    \Delta D & =  \sum_{i,\tau<t}\left( h_{i,\tau}\left(\alpha_{i,t}+{\beta}_{\tau}\right)- h_{i,\tau}\left(\alpha_{i,t-1}+{\beta}_{\tau}\right)\right) \nonumber\\
    & \quad   + \sum_{i} h_{i,t}\left(\alpha_{i,t}+{\beta}_t\right) + \sum_{i} C_i \left(\alpha_{i,t}-\alpha_{i,t-1}\right) + \beta_t A_t\\
    &  \stackrel{(a)}{\leq}  \sum_{i} h_{i,t}\left(\Psi_{i,t}(\hat{a}_{i,t})+\tilde{\beta}_t\right) + \sum_{i} C_i \left(\Psi_{i,t}(\hat{v}_{i,t})-\Psi_{i,t}(0)\right) + \tilde{\beta}_t A_t  \\
    &  \stackrel{(b)}{=} \sum_{i} \frac{1}{\pi}\left[\tilde{g}_{i,t}(\hat{a}_{i,t})-\left( \Psi_{i,t}(\hat{a}_{i,t})+\tilde{\beta}_t\right)\hat{a}_{i,t}\right] + \sum_{i} C_i \left(\Psi_{i,t}(\hat{a}_{i,t})-\Psi_{i,t}(0)\right) + \tilde{\beta}_t A_t \\
    & \stackrel{(c)}{=} \sum_{i} C_i \left(\Psi_{i,t}\left(\hat{a}_{i,t}\right)-\Psi_{i,t}(0)\right) + \frac{1}{\pi}\sum_i \left(\tilde{g}_{i,t}(\hat{a}_{i,t}) - \Psi_{i,t}(\hat{a}_{i,t})\cdot \hat{a}_{i,t}\right)
\end{align}
We show the above equality of inequality one-by-one. 
\begin{itemize}
    \item 
(a) is according to~\eqref{eq:dual-baseline-owfm-C} (the way to set the dual variables) and the facts that $a_{i,t}\geq a_{i,t-1}$ and $h_{i,t}(\lambda)$ is non-increasing in $\lambda$. 
\item
(b) is according to the fact that by~\eqref{eq:kkt-aat-gradient-C},~\eqref{eq:kkt-aat-sigma-C} and~\eqref{eq:kkt-aat-gamma-C}$, \hat{a}_{i,t}$ is the optimal solution to 
\begin{equation}
    \max_{0\leq v\leq \pi\cdot \delta_{i,\tau}} \tilde{g}_{i,\tau}(v)-\left( \Psi_{i,t}(\hat{a}_{i,t})+\tilde{\beta}_t\right)\cdot v.
\end{equation}
Also, we have
\begin{align}
   & \max_{0\leq v\leq \pi\cdot \delta_{i,\tau}} \tilde{g}_{i,\tau}(v)-\left( \Psi_{i,t}(\hat{a}_{i,t})+\tilde{\beta}_t\right)\cdot v\\
   = &\max_{0\leq v\leq \pi\cdot \delta_{i,\tau}} \pi\cdot g_{i,\tau}(\frac{v}{\pi})-\left( \Psi_{i,t}(\hat{a}_{i,t})+\tilde{\beta}_t\right)\cdot v\\
   = & \pi\cdot \max_{0\leq v\leq \pi\cdot \delta_{i,\tau}} \pi\cdot g_{i,\tau}(\frac{v}{\pi})-\left( \Psi_{i,t}(\hat{a}_{i,t})+\tilde{\beta}_t\right)\cdot \frac{v}{\pi}\\
   = & \pi\cdot \max_{0\leq v\leq \delta_{i,\tau}} \pi\cdot g_{i,\tau}(v)-\left( \Psi_{i,t}(\hat{a}_{i,t})+\tilde{\beta}_t\right)\cdot v \\
   = & \pi \cdot h_{i,t} \left(\Psi_{i,t}(\hat{a}_{i,t})+\tilde{\beta}_t\right).
\end{align}
\item
(c) is according to~\eqref{eq:kkt-aat-beta-C}.
\end{itemize}


Recall \eqref{eq:defn_Phi-C} that 
\[
\Psi_{i,t}(a) = f_i(C_i)\cdot G_{i,t}(C_i,a)-\frac{1}{\pi\cdot C_i}\int_0^{C_i} G_{i,t}\left(x,a\right)\cdot f_i(x) dx.
\]

Plugging $\Psi_{i,t}(\cdot)$, i.e., \eqref{eq:defn_Phi-C}), in $\Delta D$, we have that 
\begin{align}
    \Delta D   & =\sum_i\left\{ C_i\cdot f_i(C_i) G_{i,t}\left(C_i,\hat{a}_{i,t}\right)-\frac{1}{\pi}\cdot\int_0^{C_i} f_i(x) G_{i,t}\left(x,\hat{a}_{i,t}\right)dx\right.\\
    &\left.-\left(   f_i(C_i) G_{i,t}\left(C_i,0\right)- \frac{1}{\pi}\cdot\int_0^{C_i} f_i(x) G_{i,t}\left(x,0\right)dx\right) +\frac{1}{\pi}\cdot\left(\tilde{g}_{i,t}(\hat{a}_{i,t}) - \Psi_{i,t}(\hat{a}_{i,t})\cdot \hat{a}_{i,t}\right)\right\}
\end{align}

By the fact that 
$C_i\cdot f_i(C_i)=\frac{1}{\pi}\frac{e^{\frac{1}{\pi}}}{e^{\frac{1}{\pi}}-1}$ according to ~\eqref{eq:defn_f-C}, we have 
\begin{align}
\Delta D =  &  \frac{1}{\pi}\frac{e^{\frac{1}{\pi}}}{e^{\frac{1}{\pi}}-1} \sum_i \left[G_{i,t}\left(C_i,\hat{a}_{i,t}\right)- G_{i,t}\left(C_i,0\right)\right]\nonumber\\
    & + \sum_{i} \left[-\frac{1}{\pi}\cdot\int_0^{C_i} f_i(x) G_{i,t}\left(x,\hat{a}_{i,t}\right)dx+ \frac{1}{\pi}\cdot\int_0^{C_i} f_i(x) G_{i,t}\left(x,0\right)dx \right]\nonumber\\
    & + \sum_{i}\frac{1}{\pi}\cdot \left(\tilde{g}_{i,t}(\hat{a}_{i,t}) - \Psi_{i,t}(\hat{a}_{i,t})\cdot \hat{a}_{i,t}\right) \label{eq:dual-increment2-C}
\end{align}

Comparing with~\eqref{eq:primal-increment-C}, to show \eqref{eq:pd-key-C}, is sufficient to show that
\begin{equation}
\tilde{g}_{i,t}(\hat{a}_{i,t}) - \Psi_{i,t}(\hat{a}_{i,t})\cdot \hat{a}_{i,t}\leq \int_0^{C_i} f_i(x) G_{i,t}\left(x,\hat{a}_{i,t}\right)dx -  \int_0^{C_i} f_i(x) G_{i,t}\left(x,0\right)dx \label{eq:optimal-objective-bound-C}
\end{equation}
 
We further have 
\begin{align}
   & \tilde{g}_{i,t}(\hat{a}_{i,t}) - \Psi_{i,t}(\hat{a}_{i,t})\cdot \hat{a}_{i,t}\\
    & \leq \tilde{g}_{i,t}(\hat{a}_{i,t})-\int_{0}^{\hat{a}_{i,t}}\Psi_{i,t}(a)da\\
   & \leq \tilde{g}_{i,t}(\hat{a}_{i,t})-\int_{0}^{\hat{a}_{i,t}}\left[f_i(C_i)\cdot G_{i,t}(C_i,a)-\frac{1}{C_i}\int_0^{C_i} G_{i,t}\left(x,a\right)\cdot f_i(x) dx\right]da\\
   & \stackrel{(a)}{=} \tilde{g}_{i,t}(\hat{a}_{i,t})-\int_{0}^{\hat{a}_{i,t}}\int_0^{c_i}\frac{\partial G_{i,t}(x,a)}{\partial x} \cdot f_i(x) dxda\\
   &= \int_0^{c_i}\int_{0}^{\hat{a}_{i,t}} \tilde{g}'_{i,t}(a)- \frac{\partial G_{i,t}(x,a)}{\partial x} da\cdot f_i(x) dx \label{eq:optimal-objective-bound2-C}
\end{align}
, where (a) is due to ~\eqref{eq:alter_psi}.

So, it reduces to show that
\begin{equation}
\label{eq:improvement_bound-C}
    \int_{0}^{\hat{a}_{i,t}} \tilde{g}'_{i,t}(a)- \frac{\partial G_{i,t}(x,a)}{\partial x} da\leq G_{i,t}\left(x,\hat{a}_{i,t}\right)-G_{i,t}\left(x,0\right)
\end{equation}
, which is equivalent to

\begin{equation}
\label{eq:improvement_bound2-C}
    \int_{0}^{\hat{a}_{i,t}}  \tilde{g}'_{i,t}(a)- \frac{\partial G_{i,t}(x,a)}{\partial x} da\leq \int_{0}^{\hat{a}_{i,t}} \frac{\partial G_{i,t}(x,a)}{\partial a} da.
\end{equation}

To show the above inequality, it is sufficient to show that,
\begin{equation}
\label{eq:improvement_rate_bound-C}
    \tilde{g}'_{i,t}(a)\leq \frac{\partial G_{i,t}(x,a)}{\partial x} +
\frac{\partial G_{i,t}(x,a)}{\partial a}.
\end{equation}
To proceed, we recall that $G_{i,t}(x,a)$ is the optimal objective to the following problem,

\begin{align}
G_{i,t}\left(x,a\right) = \max\quad & \sum_{\tau\in[t]}\tilde{g}_{i,\tau}( v_{i,\tau}) \label{eq:defn_G_re-C}\\
\text{s.t}\quad  & \sum_{\tau\in[t]} v_{i,\tau}\leq x\label{eq:defn_G_inv_re-C}\\
& 0\leq v_{i,t}\leq a\label{eq:defn_G_rate_t_re-C}\\
 &0\leq v_{i,\tau}\leq \hat{a}_{i,\tau}, \forall \tau\in [t-1].\label{eq:defn_G_rate_tau_re-C}
\end{align}

Let $\eta$, $\psi_t$ and $\phi_t$, and $\{\psi_\tau\}_{\tau\in[t-1]}$ and $\{\phi_\tau\}_{\tau\in[t-1]}$ be the dual variable associated with \eqref{eq:defn_G_inv_re-C},~\eqref{eq:defn_G_rate_t_re-C}, and~\eqref{eq:defn_G_rate_tau_re-C}, respectively.


According to sensitivity analysis of convex program, we have 
\begin{equation}
\label{eq:dual-sensitivity-C}
\frac{\partial G_{i,t}(x,a)}{\partial x} = \eta^*,
\frac{\partial G_{i,t}(x,a)}{\partial a}=\phi_t^*
\end{equation}
According to the KKT condition of the optimal solution, we have 
\begin{equation}
\label{eq:improvement_rate_bound2-C}
    \tilde{g}'_{i,t}(a)\leq  \tilde{g}'_{i,t}(v_{i,t}^*)=\eta^* + \phi_t^*-\psi_t^*\leq \eta^* + \phi_t^*
\end{equation}
, where $v_{i,t}^*$, $ \phi_t^*$, $\psi_t^*$, $\eta^*$ and $\phi_t^*$ represent the optimal primal solution and dual solution, and the first inequality is due to the fact that $ \tilde{g}_{i,t}(\cdot)$ is concave and $v_{i,t}^*\leq a$.

Combining~\eqref{eq:dual-sensitivity-C} and~\eqref{eq:improvement_rate_bound2-C}, we conclude~\eqref{eq:improvement_rate_bound-C}, which leads to~\eqref{eq:improvement_bound-C} and~\eqref{eq:improvement_bound2-C}. Combining~\eqref{eq:optimal-objective-bound2-C} and~\eqref{eq:improvement_bound-C}, we conclude~\eqref{eq:optimal-objective-bound-C}. Combining ~\eqref{eq:optimal-objective-bound-C},~\eqref{eq:primal-increment-C} and ~\eqref{eq:dual-increment2-C}, we easily conclude~\eqref{eq:pd-key-C}. Summing~\eqref{eq:pd-key-C} over all time slot, we have 
\begin{equation}
 \sum_{i\in[N]} \tilde{OPT}_{i,t} \geq  \pi\cdot \frac{e^{\frac{1}{\pi}}-1}{e^\frac{1}{\pi}}\cdot \text{Dual-OOIC-M}\geq \pi\cdot \frac{e^{\frac{1}{\pi}}-1}{e^\frac{1}{\pi}}\cdot OPT_t
\end{equation}
\end{proof}

\end{document}